\documentclass[authoryear,review, 12pt]{elsarticle}

\newcommand{\maxwidth}{\textwidth}

\usepackage{alltt}
\usepackage[T1]{fontenc}
\usepackage[latin9]{inputenc}
\usepackage{geometry}
\geometry{verbose}
\setlength{\parskip}{\bigskipamount}
\setlength{\parindent}{0pt}
\usepackage{bm}
\usepackage{amsthm}
\usepackage{amsmath}
\usepackage{amssymb}
\usepackage{undertilde}
\usepackage{graphicx}
\usepackage{setspace}
\usepackage{esint}
\usepackage{booktabs}
\usepackage{color}
\usepackage{multirow}
\usepackage{natbib}
\usepackage{ifthen}

\mathchardef\mhyphen="2D 

\newtheorem{thm}{Theorem}

\setlength{\textwidth}{6.5in}
\setlength{\textheight}{8.5in}
\textwidth=6.5in
\textheight=8.5in
\setlength{\topmargin}{0in}
\setlength{\oddsidemargin}{0in}
\setlength{\evensidemargin}{0in}

\journal{arXiv}

\newboolean{thesis}
\setboolean{thesis}{false}

\begin{document}

\begin{frontmatter}

\title{Local Adaptive Grouped Regularization and its Oracle Properties for Varying Coefficient Regression}

\author[uw]{Wesley Brooks}
\ead{wrbrooks@uwalumni.com}

\author[uw]{Jun Zhu}
\ead{jzhu@stat.wisc.edu}

\author[zlu]{Zudi Lu}
\ead{Z.Lu@soton.ac.uk}

\address[uw]{Department of Statistics, University of Wisconsin, Madison, WI 53706}
\address[zlu]{School of Mathematical Sciences, University of Southampton, Highfield, Southampton UK}

\begin{abstract}
Varying coefficient regression is a flexible technique for modeling data where the coefficients are functions of some effect-modifying parameter, often time or location in a certain domain. While there are a number of methods for variable selection in a varying coefficient regression model, the existing methods are mostly for global selection, which includes or excludes each covariate over the entire domain. Presented here is a new local adaptive grouped regularization (LAGR) method for local variable selection in spatially varying coefficient linear and generalized linear regression. LAGR selects the covariates that are associated with the response at any point in space, and simultaneously estimates the coefficients of those covariates by tailoring the adaptive group Lasso toward a local regression model with locally linear coefficient estimates. Oracle properties of the proposed method are established under local linear regression and local generalized linear regression. The finite sample properties of LAGR are assessed in a simulation study and for illustration, the Boston housing price data set is analyzed.
\end{abstract}

\begin{keyword}
adaptive Lasso, local generalized linear regression, local polynomial regression, nonparametric, regularization method, varying coefficient model
\end{keyword}

\end{frontmatter}

\section{Introduction}

Whereas the coefficients in traditional linear regression are scalar
constants, the coefficients in a varying coefficient regression (VCR)
model are functions - often \emph{smooth} functions - of some effect-modifying
parameter \citep{Cleveland-Grosse-1991,Hastie-Tibshirani-1993}. Here
we treat the case of a VCR model on a spatial domain where the spatial
location is a two-dimensional effect-modifying parameter. Current
practice for VCR models relies on global model selection to decide
which variables should be included in the model, meaning that covariates
are selected for inclusion or exclusion over the entire domain.
Various methods have been developed by using, for example, P-splines
\citep{Antoniadis:2012a}, basis expansion \citep{Wang-2008a}, and
local regression \citep{Wang-Xia-2009}. Since the coefficients vary
in a VCR model, in principle there is no reason that the best model
must use the same set of covariates everywhere on the domain - that
is, some of the coefficients may be zero in part of the domain. New
methodology is developed here for guiding the decision of which covariates
belong in the VCR model at any location, termed local variable selection,
as the literature on how to do so is currently scarce.
Such new methodology provides for more flexible variable selection in regression models
with coefficients that vary in space.

Specifically, local adaptive grouped regularization (LAGR) is developed
here as a novel method of local variable selection at a given location
in the domain of a VCR model. The method of LAGR applies to VCR models
where the coefficients are estimated using locally linear kernel smoothing.
Kernel smoothing for nonparametric regression is described in detail
in \citet*{Fan-Gijbels-1996}. The extension to estimating VCR models
is made by \citet{Fan-Zhang-1999} for a VCR model with a univariate
effect-modifying parameter, and by \citet{Sun-Yan-Zhang-Lu-2014}
for a two-dimensional effect-modifying parameter in a spatial VCR
with autocorrelation. These methods mitigate the boundary effect by
estimating the coefficients as local polynomials of odd degree (usually
locally linear) \citep{Hastie:1993b}. However, none of these authors addressed local variable selection.
In this work, we focus on local variable selection with a
two-dimensional effect-modifying parameter and discuss the effect
of dimensionality on the results.

For standard linear regression models, the least absolute shrinkage
and selection operator (Lasso) is a regularization method that
simultaneously selects covariates for the regression model and shrinks
the coefficient estimates toward zero \citep{Tibshirani-1996}. However,
the Lasso can be inconsistent for variable selection and inefficient
for coefficient estimation \citep{Zou-2006}. The adaptive Lasso (AL)
is a refinement of the Lasso that produces consistent estimates of
the coefficients and has been shown to have appealing properties for
variable selection, which under suitable conditions include the ``oracle''
property of asymptotically including exactly the correct set of covariates
and estimating their coefficients as well as if the correct covariates
were known in advance \citep{Zou-2006}. For data where the observed
covariates fall into mutually exclusive groups that are known in advance,
the adaptive group Lasso has similar oracle properties to the adaptive
Lasso but selects groups rather than individual covariates
\citep{Yuan-Lin-2006,Wang-Leng-2008}. An innovation here is to develop an adaptive group Lasso for
local variable selection and coefficient estimation in a locally linear
regression model, where each group consists of a single covariate
and its interactions with the effect-modifying parameter. Further, we consider both
varying coefficient linear regression for Gaussian response and varying coefficient
generalized linear regression for responses that are not necessarily
Gaussian. We show that the proposed LAGR method possesses the oracle properties of asymptotically
selecting exactly the correct local covariates and estimating their
local coefficients as accurately as would be possible if the identity
of the nonzero coefficients for the local model were known in advance.

The remainder of this paper is organized as follows. The kernel-based
estimation of a VCR model is described in Section \ref{sec:vcr}.
The proposed LAGR technique for varying coefficient linear regression
and its oracle properties are presented in Section \ref{sec:lagr-gaussian}.
In Section \ref{sec:simulations}, the finite-sample properties of
LAGR are evaluated in a simulation study, and in Section \ref{sec:example}
LAGR is applied to the Boston housing price dataset. In Section \ref{sec:lagr-gllm},
LAGR is extended to varying coefficient generalized linear regression
and the oracle properties for this setting are established, followed
by conclusions and discussion in Section 7. Technical proofs are given in the appendices.

\section{Varying Coefficient Regression\label{sec:vcr}}

\subsection{Varying Coefficient Model}

Consider $n$ observation locations $\bm{s}_{i}=(s_{i,1},\; s_{i,2})^{T}$
for $i=1,\dots,n$, which are distributed in a domain $\mathcal{D}\subset\mathbb{R}^{2}$
according to a density $f$. For $i=1,\dots,n$, let $Y_{i}=Y(\bm{s}_{i})$
and $\bm{X}_{i}=\bm{X}(\bm{s}_{i})$ denote, respectively, the univariate
response and the $(p+1)$--vector of covariates measured at location
$\bm{s}_{i}$. At location $\bm{s}_{i}$, assume that the outcome
is related to the covariates by a linear regression where the coefficients
$\bm{\beta}(\bm{s}_{i})$ are functions in the two dimensions and
$\varepsilon_{i}$ is random error at location $\bm{s}_{i}$. That
is, 
\begin{align}
Y_{i}=\bm{X}_{i}^{T}\bm{\beta}(\bm{s}_{i})+\varepsilon_{i}.\label{eq:lm(s)}
\end{align}

Further assume that the error term $\varepsilon_{i}$ is normally
distributed with zero mean and variance $\sigma^{2}$, and that $\varepsilon_{i}$,
$i=1,\dots,n$ are independent. That is, for $\bm{\varepsilon}=\left(\varepsilon_{1},\dots,\varepsilon_{n}\right)^{T}$,
$\bm{\varepsilon}\sim N\left(\bm{0},\sigma^{2}\bm{I}_{n}\right)$
where $\bm{I}_{n}$ denotes the identity matrix. 

In the context of nonparametric regression, the boundary-effect bias
can be reduced by local polynomial modeling, usually in the form of
a locally linear model \citep{Fan-Gijbels-1996}. Here, to prepare
for the estimation of locally linear coefficients, we augment the
design matrix with interactions between the covariates and location in two dimensions
\citep{Wang-2008b}. Let $\bm{X}=\left(\bm{X}_{1},\dots,\bm{X}_{n}\right)^{T}$
be the design matrix of observed covariate values. Then the augmented
local design matrix at location $\bm{s} = (u, v)^T$ is defined to be $\bm{Z}(\bm{s})=\left(\bm{X}\ \:\bm{L}(\bm{s}) \bm{X}\ \:\bm{M}(\bm{s}) \bm{X}\right),$
where $\bm{L}(\bm{s})=\text{diag}\{s_{i',1}-u\}_{i'=1}^{n}$ and
$\bm{M}(\bm{s})=\text{diag}\{s_{i',2}-v\}_{i'=1}^{n}$. 
The vector of augmented local coefficients at location $\bm{s}$ is defined to be $\bm{\zeta}(\bm{s})=\left(\bm{\beta}(\bm{s})^{T},\;\nabla_{u}\bm{\beta}(\bm{s})^{T},\;\nabla_{v}\bm{\beta}(\bm{s})^{T}\right)^{T}$,
where $\nabla_{u}\bm{\beta}(\bm{s})$ and $\nabla_{v}\bm{\beta}(\bm{s})$ denote the local gradients 
of the coefficient surfaces.

\subsection{Coefficient Estimation via Local Likelihood}


Let $\bm{\zeta}=\left(\bm{\zeta}(\bm{s}_{1}),\dots,\bm{\zeta}(\bm{s}_{n})\right)^{T}$
denote a matrix of the local coefficients at all observation locations
$\bm{s}_{1},\dots,\bm{s}_{n}$ and let $\left\{\bm{Z}(\bm{s})\right\}_{i}$ denote the
$i$th row of the matrix $\bm{Z}(\bm{s})$ as a column vector. The total log-likelihood of the observed
data is the sum of the log-likelihood of each individual observation:
\begin{align}
\ell\left(\bm{\zeta}\right)= & -(1/2)\sum_{i=1}^{n}\big( \log{\sigma^{2}}+\sigma^{-2} \left[ y_{i} - \{\bm{z}(\bm{s}_i)\}_i \bm{\zeta}(\bm{s}_{i}) \right] ^{2} \big).\label{eq-coefficients}
\end{align}

Since there are a total of $3(p+1)n+1$ parameters for $n$ observations,
the model is not identifiable and it is not possible to directly maximize
the total log-likelihood (\ref{eq-coefficients}). When the coefficient functions are smooth, though,
the coefficients $\bm{\zeta}(\bm{s})$ at location $\bm{s}$ can be
approximated by the coefficients $\bm{\zeta}(\bm{t})$ , where $\bm{t}$
is within some neighborhood of $\bm{s}$. This intuition is formalized
by the following local log-likelihood at location $\bm{s}\in\mathcal{D}$:
\begin{align}
\ell\left(\bm{\zeta}(\bm{s})\right)= & -(1/2)\sum_{i=1}^{n}K_{h}(\|\bm{s}-\bm{s}_{i}\|)\left[\log\sigma^{2}+\sigma^{-2}\left\{ y_{i}-\bm{z}_{i}^{T}\bm{\zeta}(\bm{s})\right\} ^{2}\right]\label{eq:local-log-likelihood}
\end{align}

where $\bm{Z}_{i}=\left\{\bm{Z}(\bm{s})\right\}_{i}$, $h$ is a bandwidth parameter, $\|\cdot\|$ is the $\ell_{2}$-norm,
and $K_{h}(\|\bm{s}-\bm{s}_{i}\|)$ for $i=1,\dots,n$ are local weights
from a kernel function. For instance, the Epanechnikov kernel is defined
as $K_{h}(\|\bm{s}_{i}-\bm{s}_{i'}\|)=h^{-2}K\left(h^{-1}\|\bm{s}_{i}-\bm{s}_{i'}\|\right)$
where $K(x)=(3/4)(1-x^{2})$ if $x<1$, and $0$ otherwise \citep{Samiuddin-el-Sayyad-1990}.

The local log-likelihood (\ref{eq:local-log-likelihood}) is maximized
to obtain an estimate $\tilde{\bm{\zeta}}(\bm{s})$ of the local coefficients
at $\bm{s}$. Let $\bm{W}\!(\bm{s})={\rm diag}\left\{ K_{h}(\|\bm{s}-\bm{s}_{i}\|)\right\} _{i'=1}^{n}$
denote a diagonal matrix of kernel weights. The local likelihood (\ref{eq:local-log-likelihood})
can be maximized by minimizing a locally weighted least squares: 
\begin{equation}
\mathcal{S}\left(\bm{\zeta}(\bm{s})\right)=(1/2)\left\{ \bm{Y}-\bm{Z}(\bm{s})\bm{\zeta}(\bm{s})\right\} ^{T}\bm{W}\!(\bm{s})\left\{ \bm{Y}-\bm{Z}(\bm{s})\bm{\zeta}(\bm{s})\right\} .\label{eq:local-sum-of-squares}
\end{equation}

The minimizer of (\ref{eq:local-sum-of-squares}) is the locally weighted least squares estimate
\begin{equation}
\tilde{\bm{\zeta}}(\bm{s})=\left\{ \bm{Z}(\bm{s})^{T}\bm{W}\!(\bm{s})\bm{Z}(\bm{s})\right\} ^{-1}\bm{Z}(\bm{s})^{T}\bm{W}\!(\bm{s})\bm{Y}.\label{eq:zeta-hat}
\end{equation}

By Theorem 3 of \citet{Sun-Yan-Zhang-Lu-2014}, for any given $\bm{s}$,
the estimated local coefficients $\tilde{\bm{\beta}}(\bm{s})=\left(\tilde{\zeta}_{1}(\bm{s}),\dots,\tilde{\zeta}_{p}(\bm{s})\right)^{T}$
converge in probability at the optimal rate of $O\left(n^{-1/3}\right)$
and are asymptotically normally distributed. The bias of the local
coefficient estimates is proportional to the second derivatives of
the true coefficient functions.

\section{Local Variable Selection with LAGR\label{sec:lagr-gaussian}}

\subsection{LAGR Penalized Local Likelihood}

Estimating the local coefficients by (\ref{eq:zeta-hat}) has traditionally
relied on variable selection \emph{a priori}; that is, a set of covariates
is pre-determined. Here we develop a new method of penalized regression
to simultaneously select covariates locally and estimate the corresponding
local coefficients. For this purpose, each raw covariate is grouped
with its covariate-by-location interactions. That is, $\bm{\zeta}_{(j)}(\bm{s})=\left(\beta_{j}(\bm{s}),\;\;\nabla_{u}\beta_{j}(\bm{s}),\;\;\nabla_{v}\beta_{j}(\bm{s})\right)^{T}$
for $j=1,\dots,p$. The proposed penalty is akin to the adaptive
group Lasso \citep{Yuan-Lin-2006,Wang-Leng-2008}. By the mechanism
of the adaptive group Lasso, covariates within the same group are
included in or dropped from the model together. The intercept group
is left unpenalized.

To select and estimate the local coefficients at location $\bm{s}$,
we minimize a penalized local sum of squares at location $\bm{s}$:
\begin{align}
\mathcal{J}\left(\bm{\zeta}(\bm{s})\right) & =\mathcal{S}\left(\bm{\zeta}(\bm{s})\right)+\mathcal{P}\left(\bm{\zeta}(\bm{s})\right),\label{eq:penalized-least-squares}
\end{align}

where $\mathcal{S}\left(\bm{\zeta}\left(\bm{s}\right)\right)$ is
the locally weighted least squares defined in (\ref{eq:local-sum-of-squares}),
$\mathcal{P}\left(\bm{\zeta}(\bm{s})\right)=\sum_{j=1}^{p}\phi_{j}(\bm{s})\|\bm{\zeta}_{(j)}(\bm{s})\|$
is a local adaptive grouped regularization (LAGR) penalty. The LAGR
penalty for the $j$th group of coefficients at location $\bm{s}$
is $\phi_{j}(\bm{s})=\lambda_{n}\|\tilde{\bm{\zeta}}_{(j)}(\bm{s})\|^{-\gamma}$,
where $\lambda_{n}>0$ is a local tuning parameter applied to all
coefficients at location $\bm{s}$, $\tilde{\bm{\zeta}}_{(j)}(\bm{s})$
is a vector of unpenalized local coefficients for the $j$th covariate
and its interactions with location from (\ref{eq:zeta-hat}), and $\gamma>1$.

Minimization of (\ref{eq:penalized-least-squares}) is by blockwise
coordinate descent, where each block is a covariate group (one raw
covariate and its interactions with location). A companion software package for estimating
$\bm{\zeta}(\bm{s})$ will be made available in \texttt{R} \citep{R-Core-2014}.

\subsection{Oracle Property\label{sub:oracle-properties}}

At location $\bm{s}$, let there be $p_{0}(\bm{s})<p$ covariates $\bm{X}_{(a)}(\bm{s})$
with nonzero local regression coefficients, denoted $\bm{\beta}_{(a)}(\bm{s})\ne\bm{0}$.
Without loss of generality, assume the indices of these covariates
are $1,\dots,p_{0}(\bm{s})$. The remaining $p-p_{0}(\bm{s})$ covariates $\bm{X}_{(b)}(\bm{s})$
have coefficients equal to zero, denoted $\bm{\beta}_{(b)}(\bm{s})=\bm{0}$.
Denote by $a_{n}=\max\left\{ \phi_{j}(\bm{s}),j\le p_{0}(\bm{s})\right\} $
the largest penalty applied to a covariate group whose true coefficient
norm is nonzero, and by $b_{n}=\min\left\{ \phi_{j}(\bm{s}),j>p_{0}(\bm{s})\right\} $
the smallest penalty applied to a covariate group whose true coefficient
norm is zero. Let $\bm{Z}_{(k)}(\bm{s})$ be the augmented design
matrix for covariate group $k$, and let $\bm{Z}_{(\mhyphen k)}(\bm{s})$
be the augmented design matrix for all the data except covariate group
$k$. Similarly, let $\bm{\zeta}_{(k)}(\bm{s})$ be the augmented
coefficients for covariate group $k$ and $\bm{\zeta}_{(\mhyphen k)}(\bm{s})$
be the augmented coefficients for all covariate groups except $k$.
Let $\nabla\zeta_{j}(\bm{s})=\left(\nabla_{u}\zeta_{j}(\bm{s}),\nabla_{v}\zeta_{j}(\bm{s})\right)^{T}$
and $\nabla^{2}\zeta_{j}(\bm{s})=\left(\begin{array}{cc}
\nabla_{uu}^{2}\zeta_{j}(\bm{s}) & \nabla_{uv}^{2}\zeta_{j}(\bm{s})\\
\nabla_{vu}^{2}\zeta_{j}(\bm{s}) & \nabla_{vv}^{2}\zeta_{j}(\bm{s})
\end{array}\right)$. Let $\kappa_{0}=\int_{\mathbb{R}^{2}}K(\|\bm{s}\|)d\bm{s}$, $\kappa_{2}=\int_{\mathbb{R}^{2}}[(1,0)\bm{s}]^{2}K(\|\bm{s}\|)d\bm{s}=\int_{\mathbb{R}^{2}}[(0,1)\bm{s}]^{2}K(\|\bm{s}\|)d\bm{s}$,
and $\nu_{0}=\int_{\mathbb{R}^{2}}K^{2}(\|\bm{s}\|)d\bm{s}$. Finally,
let $\xrightarrow{p}$ and $\xrightarrow{d}$ denote convergence in
probability and distribution, respectively, as $n\to\infty$.

Assume the following regularity conditions.
\begin{itemize}
\item[(C.1)] The kernel function $K(\cdot)$ is bounded, positive, symmetric,
and Lipschitz continuous on $\mathbb{R}$, and has a bounded support.
\item[(C.2)] The coefficient functions $ $$\beta_{j}(\cdot)$ for $j=1,\dots,p$
have continuous second-order partial derivatives at $\bm{s}$.
\item[(C.3)] The covariates $\bm{X}(\bm{s}_{1}),\dots,\bm{X}(\bm{s}_{n})$ are
random vectors that are independent of $\varepsilon_{1},\dots,\varepsilon_{n}$.
Also $\bm{\Psi}(\bm{s})=E\left\{ \bm{X}(\bm{s})\bm{X}(\bm{s})^{T}|\bm{s}\right\} $
and $\bm{\Psi}_{(a)}(\bm{s})=E\left\{ \bm{X}_{(a)}(\bm{s})\bm{X}_{(a)}(\bm{s})^{T}|\bm{s}\right\} $
are positive-definite and differentiable at location $\bm{s}$.
\item[(C.4)] $E\left\{ \left|\bm{X}(\bm{s})\right|^{3}|\bm{s}\right\} $ and $E\left\{ Y(\bm{s})^{4}|\bm{X}(\bm{s}),\bm{s}\right\} $
are continuous at a given location $\bm{s}$.
\item[(C.5)] The observation locations $\{\bm{s}_{i}\}$ are a sequence of
design points on a bounded compact support $\mathcal{S}$. Further,
there exists a positive joint density function $f(\cdot)$ satisfying
a Lipschitz condition such that 
\[
\sup_{\bm{s}\text{\ensuremath{\in}}\mathcal{S}}\left|n^{-1}\sum_{i=1}^{n}\left[r(\bm{s}_{i})K_{h}(\|\bm{s}_{i}-\bm{s}\|)\right]-\int r(\bm{t})K_{h}(\|\bm{t}-\bm{s}\|)f(\bm{t})d\bm{t}\right|=O(h)
\]
where $f(\cdot)$ is bounded away from zero on $\mathcal{S}$, $r(\cdot)$
is any bounded continuous function, and $K_{h}(\cdot)=K(\cdot/h)/h^{2}$.
\item[(C.6)] $h=O\left(n^{-1/6}\right)$.
\item[(C.7)] $h^{-1}n^{-1/2}a_{n}\xrightarrow{p}0$ and $hn^{-1/2}b_{n}\xrightarrow{p}\infty$.
\end{itemize}
Conditions (C.1)--(C.4) are common in the literature on nonparametric
estimation, for instance see conditions (1)--(3) of \citet{Sun-Yan-Zhang-Lu-2014}
and conditions (5) and (6) of \citet{Cai-Fan-Li-2000}. However, the
covariates $\bm{X}(\bm{s}_{1}),\dots,\bm{X}(\bm{s}_{n})$ were assumed
to be $iid$ in \citet{Sun-Yan-Zhang-Lu-2014}, which is not required
here. The existence of of $\bm{\Psi}(\cdot)$ is needed for the existence
of the limiting distribution of $\hat{\bm{\beta}}(\bm{s})$; its differentiability
is used in the Taylor's expansions. Condition (C.4) is used when bounding
the remainder term in the Taylor's expansions. Condition (C.5) is
the same as condition (4) of \citet{Sun-Yan-Zhang-Lu-2014}.
Under condition (C.6), the coefficient estimates attain the optimal
rate of convergence for bivariate nonparametric regression. Condition
(C.7) is needed for establishing the oracle properties, and is a refinement
of the condition for the adaptive group Lasso \citep{Wang-Leng-2008}.

In particular, satisfying (C.7) implies an additional restriction
on $\gamma$, the unpenalized group norm exponent in the LAGR penalty.
Under condition (C.7), the local penalty tends to zero on covariates
with true nonzero coefficients and to infinity on covariates with
true zero coefficients. By (C.7), $h^{-1}n^{-1/2}\lambda_{n}\to0$
for all $j\le p_{0}(\bm{s})$ and $hn^{-1/2}\lambda_{n}\|\bm{\zeta}_{(j)}(\bm{s})\|^{-\gamma}\to\infty$
for all $j>p_{0}(\bm{s})$. We require that $\lambda_{n}$ satisfy both assumptions.
Suppose $\lambda_{n}=n^{\alpha}$. Since $h=O\left(n^{-1/6}\right)$
and $\|\tilde{\bm{\zeta}}_{(p)}(\bm{s})\|=O\left(h^{-1}n^{-1/2}\right)$,
it follows that $h^{-1}n^{-1/2}\lambda_{n}=O\left(n^{-1/3+\alpha}\right)$
and $hn^{-1/2}\lambda_{n}\|\tilde{\bm{\zeta}}_{(p)}(\bm{s})\|^{-\gamma}=O\left(n^{-2/3+\alpha+\gamma/3}\right)$.
Thus, $\left(2-\gamma\right)/3<\alpha<1/3$, which can only be satisfied
for $\gamma>1$.
\begin{thm}[Asymptotic normality]
\label{theorem:normality}  Under (C.1)--(C.8),
\begin{gather*}
\left\{ f(\bm{s})h^{2}n\right\} ^{1/2}\left[\hat{\bm{\beta}}_{(a)}(\bm{s})-\bm{\beta}_{(a)}(\bm{s})-(2\kappa_{0})^{-1}\kappa_{2}h^{2}\left\{ \nabla_{uu}^{2}\bm{\beta}_{(a)}(\bm{s})+\nabla_{vv}^{2}\bm{\beta}_{(a)}(\bm{s})\right\} \right]\\
\xrightarrow{d}N\left(0,\kappa_{0}^{-2}\nu_{0}\sigma^{2}\Psi_{(a)}(\bm{s})^{-1}\right),
\end{gather*}
where $\left\{ \nabla_{uu}^{2}\bm{\beta}_{(a)}(\bm{s})+\nabla_{vv}^{2}\bm{\beta}_{(a)}(\bm{s})\right\} =\left(\nabla_{uu}^{2}\bm{\beta}_{1}(\bm{s})+\nabla_{vv}^{2}\bm{\beta}_{1}(\bm{s}),\dots,\nabla_{uu}^{2}\bm{\beta}_{p_{0}}(\bm{s})+\nabla_{vv}^{2}\bm{\beta}_{p_{0}}(\bm{s})\right)^{T}$.
\end{thm}

\begin{thm}[Selection consistency]
\label{theorem:selection} Under (C.1)--(C.8), for $j>p_{0}(\bm{s})$,
\[
P\left\{ \|\hat{\bm{\zeta}}_{(j)}(\bm{s})\|=\bm{0}\right\} \to1.
\]
\end{thm}
Theorem \ref{theorem:normality} indicates that the LAGR estimates
for true nonzero coefficients have the same asymptotic distribution
as a local regression model where the true nonzero coefficients are
known in advance. Further, by Theorem \ref{theorem:selection}, the
LAGR estimates of true zero coefficients tend to zero with probability
one. Together, local variable selection and local coefficient estimation by LAGR
have the oracle property. The technical proofs of Theorems \ref{theorem:normality}
and \ref{theorem:selection} are given in Appendix A.

\subsection{Tuning Parameter Selection}

In practical application, it is necessary to select the LAGR tuning
parameter $\lambda_{n}$ for each local model. A popular approach
in other Lasso-type problems is to select the tuning parameter that
maximizes a criterion that approximates the expected log-likelihood
of a new, independent data set drawn from the same distribution. This
is the framework of Mallows' ${\rm C_p}$, Stein's unbiased risk estimate (SURE)
and Akaike's information criterion (AIC) \citep{Mallows-1973,Stein-1981,Akaike-1973}.

These criteria use a so-called covariance penalty to estimate the
bias due to using the same data set to select a model and to estimate
its parameters \citep{Efron:2004a}. We adopt the approximate degrees
of freedom for the adaptive group Lasso from \citet{Yuan-Lin-2006}
and minimize the AIC to select the tuning parameter $\lambda_{n}$. That is, let

\begin{align*}
\hat{{\rm df}}(\lambda_{n};\bm{s})= & \sum_{j=1}^{p}I\left(\|\hat{\bm{\zeta}}(\lambda_{n};\bm{s})\|>0\right) + d \sum_{j=1}^{p}\|\hat{\bm{\zeta}}(\lambda_{n};\bm{s})\|\|\tilde{\bm{\zeta}}(\bm{s})\|^{-1},\\
\text{AIC}(\lambda_{n};\bm{s})= & \sum_{i=1}^{n}K_{h}(\|\bm{s}-\bm{s}_{i}\|)\sigma^{-2}\left\{ y_{i}-\bm{z}_{i}^{T}\hat{\bm{\zeta}}(\lambda_{n};\bm{s})\right\} ^{2}+2\hat{{\rm df}}(\lambda_{n};\bm{s}),\\
\end{align*}
where $I\left(\cdot\right)$ is the indicator function, $d$ is the dimension of the effect-modifying parameter, and the local
coefficient estimate is written $\hat{\bm{\zeta}}(\lambda_{n};\bm{s})$
to emphasize that it depends on the tuning parameter. Here, $d=2$. More general dimensionalilty is discussed in Section \ref{section-conclusion}.

\section{Simulation Study\label{sec:simulations}}

\subsection{Simulation Setup}

A simulation study was conducted to assess the performance of the
method described in Sections \ref{sec:vcr}--\ref{sec:lagr-gaussian}.
Data were simulated on the domain $[0,1]^{2}$, which was divided
into a $20\times20$ grid. Each of $p=5$ covariates $X_{1},\dots,X_{5}$
was simulated by a Gaussian random field (GRF) with mean zero, nugget variance $0.2$, and exponential
covariance $\text{Cov}\left(X_{ij},X_{i'j}\right)=\sigma_{x}^{2}\exp\left(-\tau_{x}^{-1}\delta_{ii'}\right)$
where $\sigma_{x}^{2}=1$ is the variance, $\tau_{x}=0.1$ is the
range parameter, and $\delta_{ii'}=\|\bm{s}_{i}-\bm{s}_{i'}\|$. Correlation
was induced between the covariates by multiplying the design matrix
$\bm{X}$ by $\bm{R}$, where $\bm{R}$ is the Cholesky decomposition
of the covariance matrix $\bm{\Sigma}=\bm{R}^{T}\bm{R}$. The covariance
matrix $\bm{\Sigma}$ is a $5\times5$ matrix that has ones on the
diagonal and $\rho$ for all off-diagonal entries, where $\rho$ is
the between-covariate correlation. 

The simulated response was $y_{i}=\bm{x}_{i}^{T}\bm{\beta}(\bm{s}_{i})+\varepsilon_{i}$
for $i=1,\dots,n$ where $n=400$ and the $\varepsilon_{i}$'s were
iid Gaussian with mean zero and variance $\sigma_{\varepsilon}^{2}$.
The coefficients $\beta_1(\bm{s}), \dots, \beta_3(\bm{s})$ were generated by GRFs, and the fourth coefficient was $\beta_4(\bm{s}) \equiv 0$.
The GRFs for generating $\beta_1(\bm{s}), \dots, \beta_3(\bm{s})$ had mean zero, no nugget variance, and exponential
covariance $\text{Cov}\left(\beta_j(\bm{s}_i), \beta_j(\bm{s}_{i'})\right)=\sigma_j^2 \exp\left(-\tau_{\beta}^{-1}\delta_{ii'}\right)$
where $\tau_{\beta}=1$ is the range parameter. The scale of the coefficient surface $\beta_j(\bm{s})$ was set via the variance $\sigma_j^2$, and the values used in the simulations were $\sigma_1^2 = 10, \sigma_2^2=1, \sigma_3^2=0.1$.
These values were chosen so that the the covariates $X_1, X_2, X_3$ would have progressively less influence on the response.
The coefficient values $\beta_1(\bm{s}), \dots, \beta_3(\bm{s})$ generated in this way are plotted in Figure \ref{fig:simulation-coefficients}.

\begin{figure}
    \includegraphics[width=\textwidth]{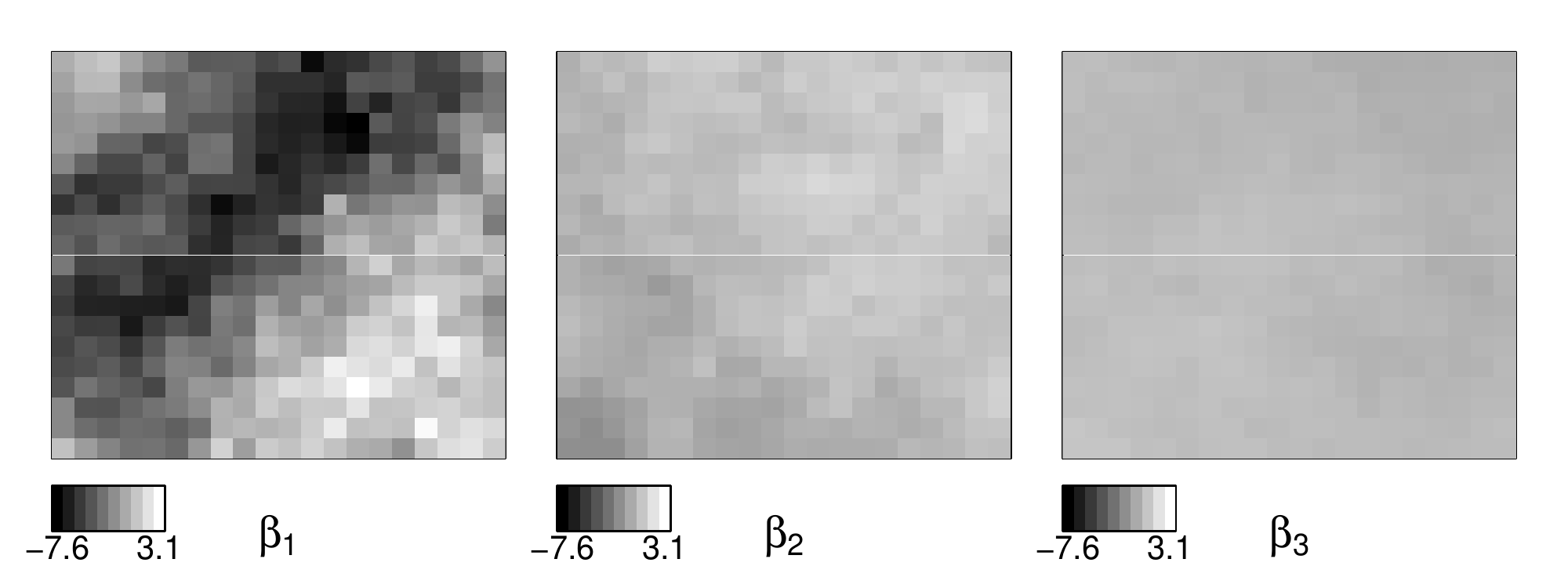}
    \caption{Left to right, the values used for coefficients $\beta_1(\bm{s}), \dots, \beta_3(\bm{s})$ in the simulation study. \label{fig:simulation-coefficients}}
\end{figure}

Two parameters were varied to produce six simulation settings.
Data were simulated with low ($\rho=0$), medium ($\rho=0.5$), or high ($\rho=0.9$)
correlation between the covariates, and with
low ($\sigma_{\varepsilon}=0.5$) or high ($\sigma_{\varepsilon}=1$)
variance for the random error term. Each setting was used to generate five data sets consisting of 400 observations each.
For each data set, estimates were made of the coefficients for three different sample sizes $N$: the full 400 observations, and subsets generated by sampling 100 or 200 unique observations uniformly from the data set.
The coefficients were estimated via LAGR and via a VCR model without variable selection as in Section \ref{sec:lagr-gaussian}.
For both estimation methods, the bandwidth parameter was $h=(3/2)N^{-1/6} - 0.36$ with a nearest neighbors type bandwidth, meaning the kernel bandwidth was adjusted at each location $\bm{s}_i$ to achieve the ratio $\sum_{i'=1}^n w_{ii'}/n = h$.

\begin{table}
	\centering
	\begin{tabular}{|ccc|cc|cc|cc|cc|}
		\hline
		\multicolumn{3}{|c|}{\begin{tabular}[c]{@{}c@{}}Simulation\\settings\end{tabular}} & \multicolumn{2}{c|}{\begin{tabular}[c]{@{}c@{}}MISE\\$\hat{\beta}_1$\end{tabular}} & \multicolumn{2}{c|}{\begin{tabular}[c]{@{}c@{}}MISE\\$\hat{\beta}_2$\end{tabular}} & \multicolumn{2}{c|}{\begin{tabular}[c]{@{}c@{}}MISE\\$\hat{\beta}_3$\end{tabular}} & \multicolumn{2}{c|}{\begin{tabular}[c]{@{}c@{}}MISE\\$\hat{\beta}_4$\end{tabular}} \\
		$n$ & $\rho$ & $\sigma_{\varepsilon}$ & LAGR & VCR & LAGR & VCR & LAGR & VCR & LAGR & VCR \\
		\hline 

  \multirow{6}{*}{100} & \multirow{2}{*}{0} & 0.5 & 2.16 & \textbf{2.15} & \textbf{0.35} & 0.36 & \textbf{0.18} & 0.24 & \textbf{0.15} & 0.29 \\ 
    &  & 1.0 & 2.19 & \textbf{2.14} & \textbf{0.38} & 0.38 & \textbf{0.17} & 0.28 & \textbf{0.16} & 0.35 \\ 
    & \multirow{2}{*}{0.5} & 0.5 & \textbf{2.36} & 2.47 & 0.40 & \textbf{0.35} & \textbf{0.19} & 0.27 & \textbf{0.26} & 0.48 \\ 
    &  & 1.0 & \textbf{2.25} & 2.48 & 0.44 & \textbf{0.39} & \textbf{0.18} & 0.34 & \textbf{0.24} & 0.58 \\ 
    & \multirow{2}{*}{0.9} & 0.5 & \textbf{3.00} & 4.90 & \textbf{0.68} & 1.16 & \textbf{0.35} & 1.07 & \textbf{0.70} & 2.22 \\ 
    &  & 1.0 & \textbf{2.77} & 5.18 & \textbf{0.61} & 1.35 & \textbf{0.38} & 1.37 & \textbf{0.53} & 2.71 \\ 
   \hline \multirow{6}{*}{200} & \multirow{2}{*}{0} & 0.5 & 1.75 & \textbf{1.72} & 0.20 & \textbf{0.18} & \textbf{0.09} & 0.15 & \textbf{0.03} & 0.10 \\ 
    &  & 1.0 & 1.79 & \textbf{1.78} & 0.27 & \textbf{0.21} & \textbf{0.11} & 0.22 & \textbf{0.05} & 0.13 \\ 
    & \multirow{2}{*}{0.5} & 0.5 & 1.80 & \textbf{1.75} & 0.25 & \textbf{0.22} & \textbf{0.12} & 0.23 & \textbf{0.05} & 0.15 \\ 
    &  & 1.0 & 1.84 & \textbf{1.83} & 0.32 & \textbf{0.28} & \textbf{0.18} & 0.34 & \textbf{0.07} & 0.21 \\ 
    & \multirow{2}{*}{0.9} & 0.5 & \textbf{2.19} & 2.37 & \textbf{0.43} & 0.76 & \textbf{0.36} & 0.98 & \textbf{0.24} & 0.75 \\ 
    &  & 1.0 & \textbf{2.25} & 2.66 & \textbf{0.52} & 1.10 & \textbf{0.57} & 1.48 & \textbf{0.32} & 1.01 \\ 
   \hline \multirow{6}{*}{400} & \multirow{2}{*}{0} & 0.5 & 1.34 & \textbf{1.33} & 0.18 & \textbf{0.15} & \textbf{0.06} & 0.06 & \textbf{0.02} & 0.05 \\ 
    &  & 1.0 & 1.37 & \textbf{1.35} & 0.22 & \textbf{0.17} & \textbf{0.08} & 0.08 & \textbf{0.02} & 0.05 \\ 
    & \multirow{2}{*}{0.5} & 0.5 & 1.37 & \textbf{1.35} & 0.20 & \textbf{0.18} & \textbf{0.07} & 0.09 & \textbf{0.03} & 0.08 \\ 
    &  & 1.0 & 1.40 & \textbf{1.39} & 0.25 & \textbf{0.21} & \textbf{0.09} & 0.13 & \textbf{0.03} & 0.09 \\ 
    & \multirow{2}{*}{0.9} & 0.5 & \textbf{1.55} & 1.66 & \textbf{0.41} & 0.47 & \textbf{0.16} & 0.36 & \textbf{0.15} & 0.40 \\ 
    &  & 1.0 & \textbf{1.57} & 1.84 & \textbf{0.44} & 0.64 & \textbf{0.17} & 0.54 & \textbf{0.14} & 0.46 \\

	\hline
	\end{tabular}
	\caption{For each setting as a combination of sample size $n$, cross-covariate correlation $\rho$, and error standard deviation $\sigma_{\varepsilon}$, the mean integrated squared error (MISE) of the coefficient estimates, averaged over five independent data sets for each simulation setting.
	The MISE of $\hat{\beta}_1, \dots, \hat{\beta}_4$ from estimation by local adaptive grouped regularization (LAGR) is compared to that from estimation by locally linear regression without selection (VCR).
	\textbf{Highlighting} indicates whether LAGR or VCR produced the smaller MISE for each coefficient surface under each simulation setting.}
	\label{tab:mise}
\end{table}

\subsection{Simulation Results}

The mean integrated squared error (MISE) of the coefficient surface estimates are in Table \ref{tab:mise}, where the MISE is calculated as
$\text{MISE}(\beta_j) = N^{-1} \sum_{i=1}^N \{ \hat{\beta}_j(\bm{s}_i) - \beta_j(\bm{s}_i) \}^2$.
The results in Table \ref{tab:mise} are averaged over the five independent data sets for each simulation setting. In general, the coefficients estimated by LAGR were more accurate in terms of MISE than those estimated by VCR. 
Although the frequency with which MISE was smaller under LAGR than under VCR for estimating $\beta_1$ and $\beta_2$ was $8$ of $18$ cases each, 
the improvement by MISE for LAGR over VCR was greater for covariates with smaller influence, with LAGR producing the smaller MISE for $\beta_3$ and $\beta_4$ in every case.
In no case was the MISE for LAGR more than $27\%$ greater than for VCR. 
The MISE for estimating $\beta_4$ with $\rho=0.9$, $\sigma_{\varepsilon}=1.0$, and $n=100$ setting was $5$ times greater for VCR than for LAGR, and under the other simulation settings the greatest improvement for LAGR over VCR tended to be a $2-3$ times reduction in MISE.

With other factors held constant, the MISE for estimating the coefficients tended to be smaller for less influential covariates and under larger sample sizes.
On the other hand, the MISE tended to increase with high error variance or increasing correlation between covariates.
In terms of MISE, the improvement from estimation by LAGR over VCR was greater for settings with smaller sample sizes, higher correlation between covariates, and greater error variance.
In fact, estimation by LAGR was always more accurate than estimation by VCR under high cross-covariate correlation $(\rho=0.9)$.

The frequencies of exact zeros among the estimates of each coefficient for each simulation setting are in Table \ref{tab:zerofreq}.
The frequency of exact zeros in the coefficient estimates generally increased as covariates grew less influential.
In particular, the estimates $\hat{\beta}_1$ were almost never exactly zero, while the estimates $\hat{\beta}_3$ and $\hat{\beta}_4$ were exactly zero more often than not.
Exact zero coefficient estimates were generally more frequent under smaller sample sizes, greater error variance, and greater cross-covariate correlation. 
Under high cross-covariate correlation, the frequency of exact zero estimates was roughly equal (and in the neighborhood of $75\%$) for $\beta_2$, $\beta_3$, and $\beta_4$, which indicates that under high cross-covariate correlation, LAGR tended to include only the most influential covariate.

\begin{table}
	\centering
	\begin{tabular}{|ccc|cccc|}
		\hline
		\multicolumn{3}{|c|}{\begin{tabular}[c]{@{}c@{}}Simulation\\settings\end{tabular}} &  \multicolumn{4}{c|}{\begin{tabular}[c]{@{}c@{}}Zero\\frequency\end{tabular}} \\
		$n$ & $\rho$ & $\sigma_{\varepsilon}$ & $\hat{\beta}_1$ & $\hat{\beta}_2$ & $\hat{\beta}_3$ & $\hat{\beta}_4$ \\
		\hline

  \multirow{6}{*}{100} & \multirow{2}{*}{0} & 0.5 & 0.00 & 0.40 & 0.67 & 0.76 \\ 
    &  & 1.0 & 0.00 & 0.57 & 0.72 & 0.80 \\ 
    & \multirow{2}{*}{0.5} & 0.5 & 0.00 & 0.47 & 0.68 & 0.75 \\ 
    &  & 1.0 & 0.01 & 0.65 & 0.77 & 0.79 \\ 
    & \multirow{2}{*}{0.9} & 0.5 & 0.00 & 0.76 & 0.79 & 0.78 \\ 
    &  & 1.0 & 0.02 & 0.84 & 0.79 & 0.76 \\ 
   \hline \multirow{6}{*}{200} & \multirow{2}{*}{0} & 0.5 & 0.00 & 0.28 & 0.68 & 0.83 \\ 
    &  & 1.0 & 0.00 & 0.39 & 0.66 & 0.84 \\ 
    & \multirow{2}{*}{0.5} & 0.5 & 0.00 & 0.36 & 0.66 & 0.81 \\ 
    &  & 1.0 & 0.00 & 0.48 & 0.69 & 0.84 \\ 
    & \multirow{2}{*}{0.9} & 0.5 & 0.03 & 0.67 & 0.74 & 0.83 \\ 
    &  & 1.0 & 0.04 & 0.71 & 0.69 & 0.84 \\ 
   \hline \multirow{6}{*}{400} & \multirow{2}{*}{0} & 0.5 & 0.00 & 0.18 & 0.56 & 0.74 \\ 
    &  & 1.0 & 0.00 & 0.31 & 0.64 & 0.82 \\ 
    & \multirow{2}{*}{0.5} & 0.5 & 0.00 & 0.24 & 0.62 & 0.73 \\ 
    &  & 1.0 & 0.00 & 0.36 & 0.69 & 0.80 \\ 
    & \multirow{2}{*}{0.9} & 0.5 & 0.02 & 0.61 & 0.77 & 0.73 \\ 
    &  & 1.0 & 0.02 & 0.68 & 0.75 & 0.80 \\

	\hline
	\end{tabular}
	\caption{For each setting as a combination of sample size $n$, cross-covariate correlation $\rho$, and error standard deviation $\sigma_{\varepsilon}$, the frequency of exact zeroes in the estimates of $\hat{\beta}_1,\dots,\hat{\beta}_4$ as estimated by local adaptive grouped regularizaton.}
	\label{tab:zerofreq}
\end{table}

\section{Data Example\label{sec:example}}

The proposed LAGR estimation method was applied to estimate the coefficients
in a VCR model for the price of homes
in Boston based on data from the 1970 U.S. census \citep{Harrison-Rubinfeld-1978,Gilley-Pace-1996,Pace-Gilley-1997}.
The data are the median price of homes sold in 506 census tracts (MEDV),
along with the potential covariates CRIM (the per-capita crime rate
in the tract), RM (the mean number of rooms for houses sold in the
tract), RAD (an index of how accessible the tract is from Boston's
radial roads), TAX (the property tax per \$10,000 of property value),
and LSTAT (the percentage of the tract's residents who are considered
``lower status''). With the Epanechnikov kernel, the nearest neighbors type bandwidth
was set to $h=0.26$.

\begin{figure}

\includegraphics[width=\maxwidth]{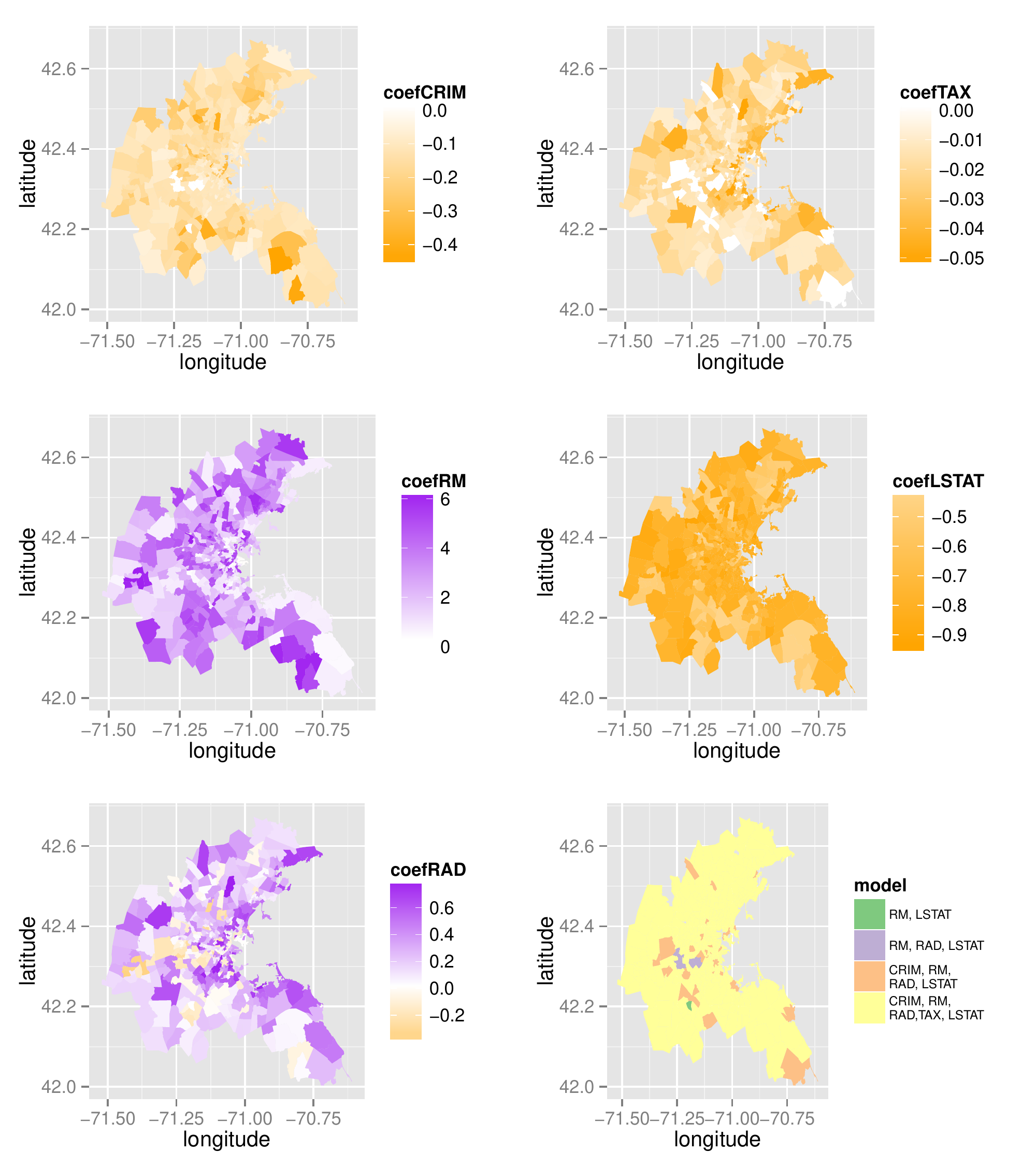} 
\caption{A varying coefficient regression model for the median house price in each census tract in Boston in 1970, estimated by local adaptive grouped regularization.
In the left column are the estimated coefficients for covariates CRIM (per-capita crime rate), RM (mean number of rooms per house), and RAD (an index of access to radial roads.
In the right column are the estimated coefficients for covariates TAX (property tax per $\$10,000$) and LSTAT (proportion of residents who are ``lower status"), and a map indicating which covariates were estimated to have nonzero coefficients in each census tract.
\label{fig:boston-lagr-coefs}}
\end{figure}

The estimates of the local coefficients are plotted in the first five panels of Figure \ref{fig:boston-lagr-coefs} and are summarized in Table \ref{tab:boston-coefs-lagr}.
The estimated coefficients of CRIM and LSTAT were everywhere negative or exactly zero, suggesting that the crime rate and proportion of ``lower-status" individuals were associated with a lower median house price. 
Meanwhile, the coefficient of RM was everywhere estimated to be positive, so the more rooms in the average house was everywhere associated with a higher median house price.
The coefficient of TAX was negative in most census tracts, but was estimated to be exactly zero in 50 tracts, indicating no discernable effect of the property tax rate on house prices in those tracts.
The coefficient of RAD is positive in some areas and negative
in others. This indicates that there are parts of Boston where access
to radial roads is associated with a greater median house price and
parts where it is associated with a lesser median house price.
The bottom right panel of Figure \ref{fig:boston-lagr-coefs} shows which covariates were estimated to have a nonzero coefficient in each tract.
There were 471 tracts where all LAGR estimated that all the covariates had a nonzero coefficient, 43 tracts where all covariates except for TAX were estimated to have nonzero coefficients, six tracts where the coefficients of CRIM and TAX were estimated to be zero, and one tract where the coefficients of CRIM, RAD, and LSTAT were estimated to be zero.

\ifthenelse{\boolean{thesis}}{
An interesting result from the data example is the apparent relationship between the property tax rate and the coefficient of the TAX variable.
Of the $506$ Boston census tracts, there were $50$ where the estimated coefficient of the TAX variable was zero.
Four of those $50$ tracts ($8\%$) were locations where the property tax rate was at least $\$666$ per $\$10000$ (which encompasses the two highest property tax rates).
In total there were $137$ tracts where the property tax rate was at least $\$666$ per $\$10000$, which was $27\%$ of all tracts.
This suggests that the coefficient of the TAX variable was less likely to be zero in tracts with the highest property tax rates than in other tracts.
A rank sum test was used to test the null hypothesis that tracts where the coefficient of the TAX variable was estimated to be zero occurred independently of the property tax rate.
The alternative considered was that the tracts with an estimated zero TAX coefficient were clustered among the tracts with a lower property tax rate.
Of $1000$ uniform samples of size $50$ from the ranks of the property tax rates, only four had a smaller sum than that observed for the tracts where the TAX coefficient was estimated to be zero.
This provides evidence that the property tax rate was more likely to have no apparent effect on the median house price in tracts where the property tax was lower.}{}

\begin{table}
\centering
\begin{tabular}{|c|rcc|}
  \hline
 Covariate & Mean & \begin{tabular}{c}Standard \\ dev. \end{tabular} & \begin{tabular}{c}Zero coef. \\ count \end{tabular}  \\ 
  \hline
CRIM & -0.15 & 0.07 & 7 \\ 
  RM & 2.56 & 1.68 & 0 \\ 
  RAD & 0.21 & 0.25 & 1 \\ 
  TAX & -0.02 & 0.01 & 50 \\ 
  LSTAT & -0.73 & 0.13 & 0 \\ 
   \hline
\end{tabular}
\caption{The mean, standard deviation, and count of zeros among the estimates of the local coefficients in a model for the median house price in 506 census tracts in Boston, with coefficients selected and fitted by local adaptive grouped regularization. The covariates are CRIM (per capita crime rate in the census tract), RM (average number of rooms per home sold in the census tract), RAD (an index of the tract's access to radial roads), TAX (property tax per USD10,000 of property value), and LSTAT (percentage of the tract's residents who are considered ``lower status").}
\label{tab:boston-coefs-lagr} 
\end{table}

\section{Extension to Generalized Linear Regression\label{sec:lagr-gllm}}

\subsection{Local GLM and Local Quasi-likelihood Estimation}

Generalized linear models (GLMs) extend the linear regression model
to a response variable following any distribution in the exponential
family \citep{McCullagh-Nelder-1989}. As is the case for the local
linear regression model, we now consider local GLM coefficients as
smooth functions of location \citep{Cai-Fan-Li-2000}. Suppose the
response variable $Y$ is from an exponential family distribution
with $E\left\{ y(\bm{s})|\bm{x}(\bm{s})\right\} =\mu(\bm{s})=b'\left(\theta(\bm{s})\right)$,
$\theta(\bm{s})=(g\circ b')^{-1}\left(\eta(\bm{s})\right)$, $\eta(\bm{s})=\bm{x}(\bm{s})^{T}\bm{\beta}(\bm{s})=g\left(\mu(\bm{s})\right)$,
$\text{\text{Var}}\left\{ y(\bm{s})|\bm{x}(\bm{s})\right\} =b''\left(\theta(\bm{s})\right)$,
and link function $g(\cdot)$. Then the probability density is 

\[
f\left(y(\bm{s})|\bm{x}(\bm{s}),\theta(\bm{s})\right)=c\left(y(\bm{s})\right)\times\exp\left\{ \theta(\bm{s})y(\bm{s})-b\left(\theta(\bm{s})\right)\right\} .
\]

If $g^{-1}(\cdot)=b'(\cdot)$, then the composition $(g\circ b')(\cdot)$
is the identity function. This particular $g$ is called the canonical
link. Assuming the canonical link, all that is required is to specify
the mean-variance relationship via the variance function, $V\left(\mu(\bm{s})\right)$.
Then the local coefficients can be estimated by maximizing the local
quasi-likelihood 

\begin{align}
\mathcal{\ell}^{*}\left(\bm{\zeta}(\bm{s})\right) & = \sum_{i=1}^{n}K_{h}\left(\|\bm{s}-\bm{s}_{i}\|\right)Q\left(g^{-1}\left(\bm{z}_{i}^{T}\bm{\zeta}(\bm{s})\right),Y_{i}\right).\label{eq:local-quasi-likelihood}
\end{align}

The local quasi-likelihood (\ref{eq:local-quasi-likelihood}) generalizes
the local log-likelihood (\ref{eq:local-log-likelihood}) that was
used to estimate coefficients in the local linear regression. The
local quasi-likelihood (\ref{eq:local-quasi-likelihood}) is concave,
and is defined in terms of its derivative, the local quasi-score function
$\left(\partial/\partial\mu\right)Q(\mu,y)=(y-\mu)\{V(\mu)\}^{-1}$.
The local quasi-likelihood is maximized by setting the local quasi-score
function to zero:

\begin{align}
(\partial/\partial\bm{\zeta})\mathcal{\ell}^{*}\left(\hat{\bm{\zeta}}(\bm{s})\right) & =\sum_{i=1}^{n}K_{h}\left(\|\bm{s}-\bm{s}_{i}\|\right)\left(y_{i}-\hat{\mu}(\bm{s}_{i};\bm{s})\right)\left\{ V\left(\hat{\mu}(\bm{s}_{i};\bm{s})\right)\right\} ^{-1}\bm{z}_{i}=\bm{0}_{3p},
\end{align}

where $\hat{\mu}(\bm{s}_{i};\bm{s})=g^{-1}\left(\bm{z}_{i}^{T}\hat{\bm{\zeta}}(\bm{s})\right)$
is the mean at location $\bm{s}_{i}$ evaluated at the estimated coefficients
$\hat{\bm{\zeta}}(\bm{s})$ at location $\bm{s}$. The asymptotic
distribution of the local coefficients in a varying-coefficient GLM
with a one-dimensional effect-modifying parameter are given in \citet{Cai-Fan-Li-2000}.
For coefficients that vary in the two dimensions, the arguments in
the proof of Theorem 1 of \citet{Cai-Fan-Li-2000} can be extended
to show that the distribution of the estimated local coefficients
is:

\begin{gather*}
\left\{ nh^{2}f(\bm{s})\right\} ^{1/2}\left[\tilde{\bm{\beta}}(\bm{s})-\bm{\beta}(\bm{s})-(1/2)\kappa_{0}^{-1}\kappa_{2}h^{2}\left\{ \nabla_{uu}^{2}\bm{\beta}(\bm{s})+\nabla_{vv}^{2}\bm{\beta}(\bm{s})\right\} \right]\\
\xrightarrow{{D}}N\left(\bm{0},\kappa_{0}^{-2}\nu_{0}\bm{\Gamma}(\bm{s})^{-1}\right).
\end{gather*}
where $\bm{\Gamma}\left(\bm{s}\right)=E\left\{ \rho\left(\bm{s},\bm{X}(\bm{s})\right)\bm{X}(\bm{s})\bm{X}(\bm{s})^{T}|\bm{s}\right\} $,\\
$\rho(\bm{s},\bm{z})=\left[g_{1}\left(\mu(\bm{s},\bm{z})\right)\right]^{2}Var\left\{ Y(\bm{s})|\bm{X}(\bm{s}),\bm{s}\right\} $,
$g_{1}(\cdot)=g'_{0}(\cdot)/g'(\cdot)$, and $g_{0}(\cdot)$ is the canonical link function.
So when the canonical link is used,
$\rho(\bm{s},\bm{z})=V\left(\mu(\bm{s},\bm{z})\right)$.

\subsection{LAGR Penalized Local Likelihood and Oracle Properties}

Whereas the method of LAGR for local linear regression uses a penalized
local likelihood, LAGR for GLMs uses a penalized negative local quasi-likelihood:

\begin{align*}
\mathcal{J}\left(\bm{\zeta}(\bm{s})\right)&= - \mathcal{\ell}^{*}\left(\bm{\zeta}(\bm{s})\right) + \mathcal{P}\left(\bm{\zeta}(\bm{s})\right)\\
&= - \sum_{i=1}^{n}K_{h}\left(\|\bm{s}-\bm{s}_{i}\|\right)Q\left(g^{-1}\left(\bm{z}_{i}^{T}\bm{\zeta}(\bm{s})\right),Y_{i}\right) + \sum_{j=1}^{p}\phi_{j}(\bm{s})\|\bm{\zeta}_{(j)}(\bm{s})\|.
\end{align*}

Further, let $\phi_{j}(\bm{s})=\lambda_{n}\|\tilde{\bm{\zeta}}_{(j)}(\bm{s})\|^{-\gamma}$,
where $\lambda_{n}>0$ is a the local tuning parameter applied to
all coefficients at location $\bm{s}$ and $\tilde{\bm{\zeta}}_{(j)}(\bm{s})$
is the vector of unpenalized local coefficients. In addition
to the definitions and conditions of Section \ref{sub:oracle-properties},
let  
\[
\bm{\Gamma}_{(a)}\left(\bm{s}\right)=E\left\{ \rho\left(\bm{s},\bm{X}_{(a)}(\bm{s})\right)\bm{X}_{(a)}(\bm{s})\bm{X}_{(a)}(\bm{s})^{T}|\bm{s}\right\}
\]
and assume the following regularity conditions:
\begin{itemize}
\item[(C.8)] The functions $g'''\left(\bm{s}\right)$, $\nabla\bm{\Gamma}\left(\bm{s}\right)$,
$\nabla\bm{\Gamma}_{(a)}\left(\bm{s}\right)$, $V\left(\mu\left(\bm{s},\bm{z}\right)\right)$,
and $V'\left(\mu\left(\bm{s},\bm{z}\right)\right)$ are continuous
at $\bm{s}$.
\item[(C.9)] The function $\left(\partial^{2}/\partial\mu^{2}\right)Q\left(g^{-1}\left(\mu\right),y\right)<0$
for $\mu\in\mathbb{R}$ and $y$ in the range of the response.
\end{itemize}
These additional conditions are not uncommon in the nonparametric
regression literature (see, e.g., conditions (1) and (2) of \citet{Cai-Fan-Li-2000}).
Condition (C.8) is needed for the Taylor's expansion of the local
quasi-likelihood. Condition (C.9) assures that the local quasi-likelihood
is concave and has a unique maximizer.
\begin{thm}[Asymptotic normality]
\label{theorem:normality-glm}  Under (C.1)--(C.10),
\begin{gather*}
\left\{ nh^{2}f(\bm{s})\right\} ^{1/2}\left[\hat{\bm{\beta}}_{(a)}(\bm{s})-\bm{\beta}_{(a)}(\bm{s})-\left(2\kappa_{0}\right)^{-1}\kappa_{2}h^{2}\left\{ \nabla_{uu}^{2}\bm{\beta}_{\left(a\right)}(\bm{s})+\nabla_{vv}^{2}\bm{\beta}_{\left(a\right)}(\bm{s})\right\} \right]\\
\xrightarrow{d}N\left(0,\kappa_{0}^{-2}\nu_{0}\bm{\Gamma}_{(a)}(\bm{s})^{-1}\right)
\end{gather*}
\end{thm}

\begin{thm}[Selection consistency]
\label{theorem:selection-glm}  Under (C.1)--(C.10), if $j>p_{0}(\bm{s})$,
\[
P\left\{ \|\hat{\bm{\zeta}}_{(j)}(\bm{s})\|=\bm{0}\right\} \to1.
\]
 
\end{thm}
By Theorem \ref{theorem:normality-glm}, the LAGR estimates achieve
the same asymptotic distribution as if the nonzero coefficients were
known in advance. The difference between the Gaussian and GLM cases
is that $\sigma^{2}\bm{\Psi}_{(a)}(\bm{s})^{-1}$ in the variance
term of Theorem \ref{theorem:normality} has been replaced by $\bm{\Gamma}_{(a)}(\bm{s})^{-1}$
in Theorem \ref{theorem:normality-glm} because the variance of the
response in the GLM case depends on the expectation of the response.
Theorem \ref{theorem:selection-glm} gives the same result for the
GLM setting as Theorem \ref{theorem:selection} does for the Gaussian
setting: the true zero coefficients are dropped from the model with
probability tending to one. Thus, the oracle properties for the GLM
setting are established. The technical proofs are given in Appendix
B and the necessary lemmas are provided in the online supplementary
materials.

\section{Conclusions and Discussion \label{section-conclusion}}

We have developed a new method of LAGR and shown its oracle properties
for local variable selection and coefficient estimation in VCR models.
This innovation provides a natural and heretofore lacking flexibility to variable selection for
varying coefficient regression models, as any covariate may be included
in part of and not necessarily the entire domain of interest.
This is in contrast to the existing literature on variable selection
for VCR models that focuses on global variable selection, where a covariate is either 
included in or excluded from the model over its entire domain. Further,
the method of LAGR extends the adaptive group Lasso. In particular,
the previous literature on the adaptive group Lasso is insufficient
for local selection in a VCR model because the local weights are functions
of the kernel $K(\cdot)$ and the bandwidth $h$. As a result, the
local observation weights change with sample size and the coefficient
estimates converge at a slower rate than in the traditional adaptive
group Lasso. Under our refined conditions, we have established the oracle property for the LAGR method.

Here we have considered the case of two-dimensional effect-modifying parameter.
Similar results can be obtained when the effect-modifying parameter
has dimension other than two, but in higher dimensions the so-called
``curse of dimensionality'' means that the estimation accuracy may quickly
degrade. Since the optimal rate of convergence for nonparametric
regression is achieved when $h=O\left(n^{-1/\{4+d\}}\right)$ where
$d$ is the dimension of the effect-modifying parameter, it follows
that to attain the oracle properties, the exponent in the adaptive
weights for LAGR estimation must satisfy $\gamma>d/2$.

A possible future direction to take is extension to local regression for spatio-temporal data such that the regression coefficients vary not only in space but also over time. This extension is left to future research.

\section*{References}
\bibliographystyle{chicago}

\clearpage

\appendix
\section{Proofs of Theorems 1--2}
\subsection*{Proof of Theorem \ref{theorem:normality}\label{sec:gaussian-normality-proof} }
\begin{proof}
Let $H_{n}(\bm{u})=\mathcal{J}\left(\bm{\zeta}(\bm{s})+h^{-1}n^{-1/2}\bm{u}\right)-\mathcal{J}\left(\bm{\zeta}(\bm{s})\right)$
and $\alpha_{n}=h^{-1}n^{-1/2}$. Then, we have 
\begin{align*}
H_{n}(\bm{u}) &= (1/2)\left[\bm{Y}-\bm{Z}(\bm{s})\left\{ \bm{\zeta}(\bm{s})+\alpha_{n}\bm{u}\right\} \right]^{T}\bm{W}\!(\bm{s})\left[\bm{Y}-\bm{Z}(\bm{s})\left\{ \bm{\zeta}(\bm{s})+\alpha_{n}\bm{u}\right\} \right]\\
 & +\sum_{j=1}^{p}\phi_{j}(\bm{s})\|\bm{\zeta}_{j}(\bm{s})+\alpha_{n}\bm{u}_{j}\|\\
 & -(1/2)\left\{ \bm{Y}-\bm{Z}(\bm{s})\bm{\zeta}(\bm{s})\right\} ^{T}\bm{W}\!(\bm{s})\left\{ \bm{Y}-\bm{Z}(\bm{s})\bm{\zeta}(\bm{s})\right\} -\sum_{j=1}^{p}\phi_{j}(\bm{s})\|\bm{\zeta}_{j}(\bm{s})\|\\
 &= \left(1/2\right)\alpha_{n}^{2}\bm{u}^{T}\left\{ \bm{Z}(\bm{s})^{T}\bm{W}\!(\bm{s})\bm{Z}(\bm{s})\right\} \bm{u}\\
 & -\alpha_{n}\bm{u}^{T}\left[\bm{Z}(\bm{s})^{T}\bm{W}\!(\bm{s})\left\{ \bm{Y}-\bm{Z}(\bm{s})\bm{\zeta}(\bm{s})\right\} \right]\\
 & +\sum_{j=1}^{p}n^{-1/2}\phi_{j}(\bm{s})n^{1/2}\left\{ \|\bm{\zeta}_{(j)}(\bm{s})+\alpha_{n}\bm{u}_{(j)}\|-\|\bm{\zeta}_{(j)}(\bm{s})\|\right\} .
\end{align*}
The limiting behavior of the last term differs between the cases $j\le p_{0}(\bm{s})$
and $j>p_{0}(\bm{s})$.
\emph{Case $j\le p_{0}(\bm{s})$}: If $j\le p_{0}(\bm{s})$, then $n^{-1/2}\phi_{j}(\bm{s})\to n^{-1/2}\lambda_{n}\|\bm{\zeta}_{(j)}(\bm{s})\|^{-\gamma}$
and \\
$|n^{1/2}\left\{ \|\bm{\zeta}_{(j)}(\bm{s})+\alpha_{n}\bm{u}_{(j)}\|-\|\bm{\zeta}_{(j)}(\bm{s})\|\right\} |\le h^{-1}\|\bm{u}_{(j)}\|$
. Thus, 
\[
\phi_{j}(\bm{s})\left(\|\bm{\zeta}_{(j)}(\bm{s})+\alpha_{n}\bm{u}_{(j)}\|-\|\bm{\zeta}_{(j)}(\bm{s})\|\right)\le\alpha_{n}\phi_{j}(\bm{s})\|\bm{u}_{(j)}\|\le\alpha_{n}a_{n}\|\bm{u}_{(j)}\|\to0.
\]
\emph{Case $j>p_{0}(\bm{s})$}: If $j>p_{0}(\bm{s})$, then $\phi_{j}(\bm{s})\left(\|\bm{\zeta}_{(j)}(\bm{s})+\alpha_{n}\bm{u}_{(j)}\|-\|\bm{\zeta}_{(j)}(\bm{s})\|\right)=\phi_{j}(\bm{s})\alpha_{n}\|\bm{u}_{(j)}\|$.
Since $h=O(n^{-1/6})$, if $hn^{-1/2}b_{n}\xrightarrow{p}\infty$,
then $\alpha_{n}b_{n}\xrightarrow{p}\infty$. Thus, if $\|\bm{u}_{(j)}\|\ne0$,
then 
\[
\alpha_{n}\phi_{j}(\bm{s})\|\bm{u}_{(j)}\|\ge\alpha_{n}b_{n}\|\bm{u}_{(j)}\|\to\infty.
\]
On the other hand, if $\|\bm{u}_{(j)}\|=0$, then $\alpha_{n}\phi_{j}(\bm{s})\|\bm{u}_{(j)}\|=0$.
Thus, the limit of $H_{n}(\bm{u})$ is the same as the limit of $H_{n}^{*}(\bm{u})$
where $H_{n}^{*}(\bm{u})=\infty$ if $\|\bm{u}_{(j)}\|\ne0$ for some
$j>p_{0}(\bm{s})$, and 
\[
H_{n}^{*}(\bm{u})=(1/2)\alpha_{n}^{2}\bm{u}^{T}\left\{ \bm{Z}(\bm{s})^{T}\bm{W}\!(\bm{s})\bm{Z}(\bm{s})\right\} \bm{u}-\alpha_{n}\bm{u}^{T}\left[\bm{Z}(\bm{s})^{T}\bm{W}\!(\bm{s})\left\{ \bm{Y}-\bm{Z}(\bm{s})\bm{\zeta}(\bm{s})\right\} \right]
\]
otherwise. It follows that $H_{n}^{*}(\bm{u})$ is convex and has
a unique minimizer, called $\hat{\bm{u}}_{n}$. Let $\hat{\bm{u}}_{(a)n}$
and $\hat{\bm{u}}_{(b)n}$ be, respectively, the subvectors of $\bm{u}_{n}$
corresponding to the true nonzero coefficients and true zero coefficients.
Then 
\[
\hat{\bm{u}}_{(a)n}=\left\{ n^{-1}\bm{Z}_{(a)}(\bm{s})^{T}\bm{W}\!(\bm{s})\bm{Z}_{(a)}(\bm{s})\right\} ^{-1}\left[hn^{1/2}\bm{Z}_{(a)}(\bm{s})^{T}\bm{W}\!(\bm{s})\left\{ \bm{Y}-\bm{Z}_{(a)}(\bm{s})\bm{\zeta}_{(a)}(\bm{s})\right\} \right]
\]
and $\hat{\bm{u}}_{(b)n}=\bm{0}.$ By epiconvergence, the minimizer
of the limiting function is the limit of the minimizers $\hat{\bm{u}}_{n}$
\citep{Geyer-1994,Knight-Fu-2000}. Since, by Lemma 2 of \citet{Sun-Yan-Zhang-Lu-2014},
\[
\hat{\bm{u}}_{(a)n}-\left(2\alpha_{n}f(\bm{s})^{1/2}\kappa_{0}\right)^{-1}\kappa_{2}h^{2}\left\{ \nabla_{uu}^{2}\bm{\zeta}_{(a)}(\bm{s})+\nabla_{vv}^{2}\bm{\zeta}_{(a)}(\bm{s})\right\} \xrightarrow{d}N\left(0,\alpha_{n}^{-2}f(\bm{s})^{-1}\kappa_{0}^{-2}\nu_{0}\sigma^{2}\bm{\Psi}_{(a)}(\bm{s})^{-1}\right)
\]
the result of Theorem \ref{theorem:normality} follows.
\end{proof}
\subsection*{Proof of Theorem \ref{theorem:selection}\label{sec:gaussian-selection-proof}}
\begin{proof}
The proof is by contradiction. Without loss of generality we consider
only the $p$th covariate group. Assume $\|\hat{\bm{\zeta}}_{(p)}(\bm{s})\|\ne0$.
Then $\mathcal{J}\left(\bm{\zeta}(\bm{s})\right)$ is differentiable
w.r.t. $\bm{\zeta}_{(p)}(\bm{s})$ and is minimized where 
\begin{align*}
\bm{0} &= \bm{Z}_{(p)}^{T}(\bm{s})\bm{W}\!(\bm{s})\left\{ \bm{Y}-\bm{Z}_{(\mhyphen p)}\left(\bm{s}\right)\hat{\bm{\zeta}}_{(\mhyphen p)}(\bm{s})-\bm{Z}_{(p)}\left(\bm{s}\right)\hat{\bm{\zeta}}_{(p)}(\bm{s})\right\} -\phi_{(p)}(\bm{s})\hat{\bm{\zeta}}_{(p)}(\bm{s})\|\hat{\bm{\zeta}}_{(p)}(\bm{s})\|^{-1}\\
 &= \bm{Z}_{(p)}(\bm{s})^{T}\bm{W}\!(\bm{s})\left[\bm{Y}-\bm{Z}(\bm{s})\bm{\zeta}(\bm{s})-\left(2\kappa_{0}\right)^{-1}h^{2}\kappa_{2}\left\{ \nabla_{uu}^{2}\bm{\zeta}(\bm{s})+\nabla_{vv}^{2}\bm{\zeta}(\bm{s})\right\} \right]\\
 & +\bm{Z}_{(p)}(\bm{s})^{T}\bm{W}\!(\bm{s})\bm{Z}_{(\mhyphen p)}(\bm{s})\left[\bm{\zeta}_{(\mhyphen p)}(\bm{s})+\left(2\kappa_{0}\right)^{-1}h^{2}\kappa_{2}\left\{ \nabla_{uu}^{2}\bm{\zeta}_{(\mhyphen p)}(\bm{s})+\nabla_{vv}^{2}\bm{\zeta}_{(\mhyphen p)}(\bm{s})\right\} -\hat{\bm{\zeta}}_{(\mhyphen p)}(\bm{s})\right]\\
 & +\bm{Z}_{(p)}(\bm{s})^{T}\bm{W}\!(\bm{s})\bm{Z}_{(p)}(\bm{s})\left[\bm{\zeta}_{(p)}(\bm{s})+\left(2\kappa_{0}\right)^{-1}h^{2}\kappa_{2}\left\{ \nabla_{uu}^{2}\bm{\zeta}_{(p)}(\bm{s})+\nabla_{vv}^{2}\bm{\zeta}_{(p)}(\bm{s})\right\} -\hat{\bm{\zeta}}_{(p)}(\bm{s})\right]\\
 & -\phi_{p}(\bm{s})\hat{\bm{\zeta}}_{(p)}(\bm{s})\|\hat{\bm{\zeta}}_{(p)}(\bm{s})\|^{-1}.
\end{align*}
Thus,
\begin{align}
&\left(n^{-1}h^{2}\right)^{1/2} \phi_{p}(\bm{s})\hat{\bm{\zeta}}_{(p)}(\bm{s})\|\hat{\bm{\zeta}}_{(p)}(\bm{s})\|^{-1}\nonumber \\
 =& \bm{Z}_{(p)}(\bm{s})^{T}\bm{W}\!(\bm{s})\left(n^{-1}h^{2}\right)^{1/2}\left[\bm{Y}-\bm{Z}(\bm{s})\bm{\zeta}(\bm{s})-\frac{h^{2}\kappa_{2}}{2\kappa_{0}}\left\{ \nabla_{uu}^{2}\bm{\zeta}(\bm{s})+\nabla_{vv}^{2}\bm{\zeta}(\bm{s})\right\} \right]\nonumber \\
 & \mkern-18mu+\left\{ n^{-1}\bm{Z}_{(p)}(\bm{s})^{T}\bm{W}\!(\bm{s})\bm{Z}_{(\mhyphen p)}(\bm{s})\right\}\!\! \left(nh^{2}\right)^{1/2} \! \left[\bm{\zeta}_{(\mhyphen p)}(\bm{s}) \!+\! \frac{h^{2}\kappa_{2}}{2\kappa_{0}} \!\left\{ \nabla_{uu}^{2}\bm{\zeta}_{(\mhyphen p)}(\bm{s}) \!+\! \nabla_{vv}^{2}\bm{\zeta}_{(\mhyphen p)}(\bm{s})\right\} \!-\! \hat{\bm{\zeta}}_{(\mhyphen p)}(\bm{s})\right]\nonumber \\
 & \mkern-18mu+\left\{ n^{-1}\bm{Z}_{(p)}(\bm{s})^{T}\bm{W}\!(\bm{s})\bm{Z}_{(p)}(\bm{s})\right\}\!\! \left(nh^{2}\right)^{1/2} \! \left[\bm{\zeta}_{(p)}(\bm{s}) \!+\! \frac{h^{2}\kappa_{2}}{2\kappa_{0}} \!\left\{ \nabla_{uu}^{2}\bm{\zeta}_{(p)}(\bm{s}) \!+\! \nabla_{vv}^{2}\bm{\zeta}_{(p)}(\bm{s})\right\} \!-\! \hat{\bm{\zeta}}_{(p)}(\bm{s})\right].\label{eq:selection}
\end{align}
From Lemma 2 of \citet{Sun-Yan-Zhang-Lu-2014}, 
\[
O_{p}\left(n^{-1}\bm{Z}_{(p)}(\bm{s})^{T}\bm{W}\!\left(\bm{s}\right)\bm{Z}_{(\mhyphen p)}(\bm{s})\right)=O_{p}\left(n^{-1}\bm{Z}_{(p)}(\bm{s})^{T}\bm{W}\!(\bm{s})\bm{Z}_{(p)}(\bm{s})\right)=O_{p}\left(1\right).
\]
From Theorem 3 of \citet{Sun-Yan-Zhang-Lu-2014}, we have that 
\[
\left(nh^{2}\right)^{1/2}\left[\hat{\bm{\zeta}}_{(\mhyphen p)}(\bm{s})-\bm{\zeta}_{(\mhyphen p)}(\bm{s})-\left(2\kappa_{0}\right)^{-1}h^{2}\kappa_{2}\left\{ \nabla_{uu}^{2}\bm{\zeta}_{(\mhyphen p)}(\bm{s})+\nabla_{vv}^{2}\bm{\zeta}_{(\mhyphen p)}(\bm{s})\right\} \right]=O_{p}\left(1\right)
\]
 and 
\[
\left(nh^{2}\right)^{1/2}\left[\hat{\bm{\zeta}}_{(p)}(\bm{s})-\bm{\zeta}_{(p)}(\bm{s})-\left(2\kappa_{0}\right)^{-1}h^{2}\kappa_{2}\left\{ \nabla_{uu}^{2}\bm{\zeta}_{(p)}(\bm{s})+\nabla_{vv}^{2}\bm{\zeta}_{(p)}(\bm{s})\right\} \right]=O_{p}\left(1\right).
\]
We showed in the proof of Theorem \ref{theorem:normality} that
\[
\left(nh^{2}\right)^{1/2}\bm{Z}_{(p)}(\bm{s})^{T}\bm{W}\!(\bm{s})\left[\bm{Y}-\bm{Z}\!(\bm{s})\bm{\zeta}(\bm{s})-\left(2\kappa_{0}\right)^{-1}h^{2}\kappa_{2}\left\{ \nabla_{uu}^{2}\bm{\zeta}(\bm{s})+\nabla_{vv}^{2}\bm{\zeta}(\bm{s})\right\} \right]=O_{p}\left(1\right).
\]
The right hand side of (\ref{eq:selection}) is $O_{p}(1)$, so for
$\hat{\bm{\zeta}}_{(p)}(\bm{s})$ to be a solution, we must have that
$hn^{-1/2}\phi_{p}(\bm{s})\hat{\bm{\zeta}}_{(p)}(\bm{s})\|\hat{\bm{\zeta}}_{(p)}(\bm{s})\|^{-1}=O_{p}\left(1\right)$.
But since by assumption $\hat{\bm{\zeta}}_{(p)}(\bm{s})\ne\bm{0}$,
there must be some $k\in\{1,2,3\}$ such that $|\hat{\zeta}_{(p)_{k}}(\bm{s})|=\max\{|\hat{\zeta}_{(p)_{m}}(\bm{s})|:1\le m\le3\}$.
And for this $k$, we have that $|\hat{\zeta}_{(p)_{k}}(\bm{s})|\|\hat{\bm{\zeta}}_{(p)}(\bm{s})\|^{-1}\ge3^{-1/2}>0$.
Since $hn^{-1/2}b_{n}\to\infty$, we have that $hn^{-1/2}\phi_{p}(\bm{s})\hat{\bm{\zeta}}_{(p)}(\bm{s})\|\hat{\bm{\zeta}}_{(p)}(\bm{s})\|^{-1}\ge hb_{n}\left(3n\right)^{-1/2}\to\infty$
and therefore the left hand side of (\ref{eq:selection}) dominates
the sum to the right side. Thus, for large enough $n$, $\hat{\bm{\zeta}}_{(p)}(\bm{s})\ne\bm{0}$
cannot maximize $\mathcal{J}\left(\cdot\right)$, and therefore $P\left\{ \hat{\bm{\zeta}}_{(b)}(\bm{s})=\bm{0}\right\} \to1$. 
\end{proof}
\section{Proofs of Theorems 3--4}
\subsection*{Proof of Theorem \ref{theorem:normality-glm}}
The next proofs require the lemmas in the web-based supplemental material.
First, let $\bm{z}\in\mathbb{R}^{3p}$. Define the $q$-functions
to be the derivatives of the quasi-likelihood: $q_{j}(t,y)=\left(\partial/\partial t\right)^{j}Q\left(g^{-1}(t),y\right)$.
Then $q_{1}\left(\eta\left(\bm{s},\bm{z}\right),\mu\left(\bm{s},\bm{z}\right)\right)=\bm{0}$,
and $q_{2}\left(\eta\left(\bm{s},\bm{z}\right),\mu\left(\bm{s},\bm{z}\right)\right)=-\rho\left(\bm{s},\bm{z}\right)$.
Let 
\[
\tilde{\bm{\beta}}_{i}''=\left[\left(\bm{s}_{i}-\bm{s}\right)^{T}\left\{ \nabla^{2}\beta_{1}(\bm{s})\right\} \left(\bm{s}_{i}-\bm{s}\right),\dots,\left(\bm{s}_{i}-\bm{s}\right)^{T}\left\{ \nabla^{2}\beta_{p}(\bm{s})\right\} \left(\bm{s}_{i}-\bm{s}\right)\right]^{T}
\]
 be the $p$-vector of quadratic forms of location interactions on
the second derivatives of the coefficient functions.
\begin{proof}
Let $H'_{n}(\bm{u})=\mathcal{J}^{*}\left(\bm{\zeta}(\bm{s})+\alpha_{n}\bm{u}\right)-\mathcal{J}^{*}\left(\bm{\zeta}(\bm{s})\right)$
and $\alpha_{n}=h^{-1}n^{-1/2}$. Then, minimizing $H'_{n}(\bm{u})$
is equivalent to minimizing $H_{n}(\bm{u})$, where 
\begin{align*}
H_{n}(\bm{u}) &= - n^{-1}\sum_{i=1}^{n}Q\left(g^{-1}\left(\bm{Z}_{i}^{T}\left\{ \bm{\zeta}(\bm{s})+\alpha_{n}\bm{u}\right\} \right),Y_{i}\right)K\left(h^{-1}\|\bm{s}-\bm{s}_{i}\|\right)\\
 & + n^{-1}\sum_{i=1}^{n}Q\left(g^{-1}\left(\bm{Z}_{i}^{T}\bm{\zeta}(\bm{s})\right),Y_{i}\right)K\left(h^{-1}\|\bm{s}-\bm{s}_{i}\|\right)\\
 & +n^{-1}\sum_{j=1}^{p}\phi_{j}\left(\bm{s}\right)\|\bm{\zeta}_{(j)}(\bm{s})+\alpha_{n}\bm{u}\|-\sum_{j=1}^{p}\phi_{j}\left(\bm{s}\right)\|\bm{\zeta}_{(j)}(\bm{s})\|.
\end{align*}
Define
\[
\Omega_{n}=\alpha_{n}\sum_{i=1}^{n}q_{1}\left(\bm{Z}_{i}^{T}\bm{\zeta}(\bm{s}),Y_{i}\right)\bm{Z}_{i}K\left(h^{-1}\|\bm{s}-\bm{s}_{i}\|\right)=\alpha_{n}\sum_{i=1}^{n}\omega_{i}
\]
and 
\[
\Delta_{n}= - \alpha_{n}^{2}\sum_{i=1}^{n}q_{2}\left(\bm{Z}_{i}^{T}\bm{\zeta}(\bm{s}),Y_{i}\right)\bm{Z}_{i}\bm{Z}_{i}^{T}K\left(h^{-1}\|\bm{s}-\bm{s}_{i}\|\right)=\alpha_{n}^{2}\sum_{i=1}^{n}\delta_{i}.
\]
Then it follows from the Taylor expansion of $\mathcal{J}^{*}\left(\bm{\zeta}(\bm{s})+\alpha_{n}\bm{u}\right)$
around $\bm{\zeta}(\bm{s})$ that
\begin{align}
H_{n}\left(\bm{u}\right) &= -\Omega_{n}^{T}\bm{u}+(1/2)\bm{u}^{T}\Delta_{n}\bm{u}+\left(\alpha_{n}^{3}/6\right)\sum_{i=1}^{n}q_{3}\left(\bm{Z}_{i}^{T}\tilde{\bm{\zeta}}_{i},Y_{i}\right)\left[\bm{Z}_{i}^{T}\bm{u}\right]^{3}K\left(h^{-1}\|\bm{s}-\bm{s}_{i}\|\right)\nonumber \\
 & +\sum_{j=1}^{p}\phi_{j}\left(\bm{s}\right)\left\{ \|\bm{\zeta}_{(j)}(\bm{s})+h^{-1}n^{-1/2}\bm{u}\|-\|\bm{\zeta}_{(j)}(\bm{s})\|\right\} .\label{eq:taylor-expanded-glm-criterion}
\end{align}
where $\tilde{\bm{\zeta}_{i}}$ lies between $\bm{\zeta}(\bm{s})$
and $\bm{\zeta}(\bm{s})+\alpha_{n}\bm{u}$. Since $q_{3}\left(\bm{Z}_{i}^{T}\tilde{\bm{\zeta}}_{i},Y_{i}\right)$
is linear in $Y_{i}$, $K\left(\cdot\right)$ is bounded, and, by
condition (C.6),
\[
\left(\alpha_{n}^{3}/6\right)E\left|\sum_{i=1}^{n}q_{3}\left(\bm{Z}_{i}^{T}\tilde{\bm{\zeta}}_{i},Y_{i}\right)\left[\bm{Z}_{i}^{T}\bm{u}\right]^{3}K\left(h^{-1}\|\bm{s}-\bm{s}_{i}\|\right)\right|=O\left(\alpha_{n}\right),
\]
the third term in (\ref{eq:taylor-expanded-glm-criterion}) is $O_{p}\left(\alpha_{n}\right)$.
The limiting behavior of the last term of (\ref{eq:taylor-expanded-glm-criterion})
differs between the cases $j\le p_{0}(\bm{s})$ and $j>p_{0}(\bm{s})$.
\emph{Case $j\le p_{0}(\bm{s})$:} If $j\le p_{0}(\bm{s})$, then $n^{-1/2}\phi_{j}(\bm{s})\to n^{-1/2}\lambda_{n}\|\bm{\zeta}_{(j)}(\bm{s})\|^{-\gamma}$
and $|\sqrt{n}\left\{ \|\bm{\zeta}_{(j)}(\bm{s})+\alpha_{n}\bm{u}_{(j)}\|-\|\bm{\zeta}_{(j)}(\bm{s})\|\right\} |\le h^{-1}\|\bm{u}_{(j)}\|$. Thus, 
\[
\lim\limits _{n\to\infty}\phi_{j}(\bm{s})\left(\|\bm{\zeta}_{(j)}(\bm{s})+\alpha_{n}\bm{u}_{(j)}\|-\|\bm{\zeta}_{(j)}(\bm{s})\|\right)\le\alpha_{n}\phi_{j}(\bm{s})\|\bm{u}_{(j)}\|\le\alpha_{n}a_{n}\|\bm{u}_{(j)}\|\to0
\]
\emph{Case $j>p_{0}(\bm{s})$:} If $j>p_{0}(\bm{s})$, then $\phi_{j}(\bm{s})\left(\|\bm{\zeta}_{(j)}(\bm{s})+\alpha_{n}\bm{u}_{(j)}\|-\|\bm{\zeta}_{(j)}(\bm{s})\|\right)=\phi_{j}(\bm{s})\alpha_{n}\|\bm{u}_{(j)}\|$.
Since $h=O(n^{-1/6})$, if $hn^{-1/2}b_{n}\xrightarrow{p}\infty$,
then $\alpha_{n}b_{n}\xrightarrow{p}\infty$. Now, if $\|\bm{u}_{(j)}\|\ne0$,
then 
\[
\alpha_{n}\phi_{j}(\bm{s})\|\bm{u}_{(j)}\|\ge\alpha_{n}b_{n}\|\bm{u}_{(j)}\|\to\infty.
\]
On the other hand, if $\|\bm{u}_{(j)}\|=0$, then $\alpha_{n}\phi_{j}(\bm{s})\|\bm{u}_{(j)}\|=0$.
By Lemma 1, $\Delta_{n}=\Delta+O_{p}\left(\alpha_{n}\right)$,
so the limit of $H_{n}(\bm{u})$ is the same as the limit of $H_{n}^{*}(\bm{u})$
where
\[
H_{n}^{*}(\bm{u})= -\Omega_{(a)n}^{T}\bm{u}_{(a)}+(1/2)\bm{u}_{(a)}^{T}\Delta_{(a)}\bm{u}_{(a)}+o_{p}\left(1\right)
\]
if $\|\bm{u}_{j}\|=0\;\forall j>p_{0}(\bm{s})$, and $H_{n}^{*}(\bm{u})=\infty$
otherwise. It follows that $H_{n}^{*}(\bm{u})$ is convex and has
a unique minimizer, called $\hat{\bm{u}}_{n}$. Let $\hat{\bm{u}}_{(a)n}$
$\Delta_{(a)}$ and $\Omega_{(a)n}$ be, respectively, the parts of
$\bm{u}_{n}$, $\Delta$, and $\Omega_{n}$ corresponding to the true
nonzero coefficients, and let $\hat{\bm{u}}_{(b)n}$ be the subvector
of $\hat{\bm{u}}_{n}$ corresponding to the true zero coefficients.
Then
\begin{align*}
\hat{\bm{u}}_{(a)n} &= \Delta_{(a)}^{-1}\Omega_{(a)n}+o_{p}\left(1\right)\text{ and }\hat{\bm{u}}_{(b)n}=\bm{0}
\end{align*}
by the quadratic approximation lemma \citep{Fan-Gijbels-1996}. By epiconvergence, the minimizer of the limiting function is the limit
of the minimizers $\hat{\bm{u}}_{n}$ \citep{Geyer-1994,Knight-Fu-2000}.
Since $\Delta$ is a constant, the normality of $\hat{\bm{u}}_{(a)n}$
follows from the normality of $\Omega_{n}$, which is established
via the Cram\'{e}r-Wold device. Let $\bm{d}\in\mathbb{R}^{3p}$ be
a unit vector, and let
\[
\xi_{i}=q_{1}\left(\bm{Z}_{i}^{T}\bm{\zeta}(\bm{s}),Y_{i}\right)\bm{d}^{T}\bm{Z}_{i}K\left(h^{-1}\|\bm{s}_{i}-\bm{s}\|\right).
\]
Then $\bm{d}^{T}\Omega_{n}=\alpha_{n}\sum_{i=1}^{n}\xi_{i}$. We establish
the normality of $\bm{d}^{T}\Omega_{n}$ by checking the Lyapunov
condition of the sequence $\left\{ \bm{d}^{T}Var\left(\Omega_{n}\right)\bm{d}\right\} ^{-1/2}\left\{ \bm{d}^{T}\Omega_{n}-\bm{d}^{T}E\Omega_{n}\right\} $.
By boundedness of $K\left(\cdot\right)$, linearity of $q_{1}\left(\bm{Z}_{i}^{T}\bm{\zeta}(\bm{s}),Y_{i}\right)$
in $Y_{i}$, and conditions (C.6) and (C.8), we have that
\begin{equation}
n\alpha_{n}^{3}E\left(\left|\xi_{1}\right|^{3}\right)=O\left(\alpha_{n}\right)\to0.\label{eq:lyapunov-bound}
\end{equation}
We observe that (\ref{eq:lyapunov-bound}) implies that $n\alpha_{n}^{3}\left|E\left(\xi_{1}\right)\right|^{3}\to0$,
and since $E\left(\left|\xi_{1}-E\xi_{1}\right|^{3}\right)<E\left\{ \left(\left|\xi_{1}\right|+\left|E\xi_{1}\right|\right)^{3}\right\} \to0$,
the Lyapunov condition is satisfied. Thus, $\Omega_{n}$ asymptotically
follows a Gaussian distribution and the result follows from the quadratic
approximation lemma.
\end{proof}
\subsection*{Proof of Theorem \ref{theorem:selection-glm}}
\begin{proof}
The proof is by contradiction. Without loss of generality we consider
only the $p$th covariate group.
Assume $\|\hat{\bm{\zeta}}_{(p)}(\bm{s})\|\ne0$. Then $\mathcal{J}\left(\bm{\zeta}(\bm{s})\right)$
is differentiable w.r.t. $\bm{\zeta}_{(p)}(\bm{s})$ and is minimized
where 
\begin{align}
\phi_{p}(\bm{s})\hat{\bm{\zeta}}_{(p)}(\bm{s})\|\hat{\bm{\zeta}}_{(p)}(\bm{s})\|^{-1} &= \sum_{i=1}^{n}q_{1}\!\left(\bm{Z}_{i}^{T}\hat{\bm{\zeta}}(\bm{s}),Y_{i}\right)\bm{Z}_{i(p)}K\left(h^{-1}\|\bm{s}_{i}-\bm{s}\|\right)\label{eq:glm-selection}
\end{align}
From Lemma 2, the right hand side of (\ref{eq:glm-selection})
is $O_{p}\left(1\right)$, so for $\hat{\bm{\zeta}}_{(p)}(\bm{s})$
to be a solution, we must have that $hn^{-1/2}\phi_{p}(\bm{s})\hat{\bm{\zeta}}_{(p)}(\bm{s})\|\hat{\bm{\zeta}}_{(p)}(\bm{s})\|^{-1}=O_{p}\left(1\right)$.
But since by assumption $\hat{\bm{\zeta}}_{(p)}(\bm{s})\ne\bm{0}$,
there must be some $k\in\{1,2,3\}$ such that $|\hat{\zeta}_{(p)_{k}}(\bm{s})|=\max\{|\hat{\zeta}_{(p)_{m}}(\bm{s})|:1\le m\le3\}$.
And for this $k$, we have that $|\hat{\zeta}_{(p)_{k}}(\bm{s})|\|\hat{\bm{\zeta}}_{(p)}(\bm{s})\|^{-1}\ge3^{-1/2}>0$.
Since $hn^{-1/2}b_{n}\to\infty$, we have that $hn^{-1/2}\phi_{p}(\bm{s})\hat{\bm{\zeta}}_{(p)}(\bm{s})\|\hat{\bm{\zeta}}_{(p)}(\bm{s})\|^{-1}\ge hb_{n}(3n)^{-1/2}\to\infty$
and therefore the left hand side of (\ref{eq:glm-selection}) dominates
the sum to the right side. Thus, for large enough $n$, $\hat{\bm{\zeta}}_{(p)}(\bm{s})\ne\bm{0}$
cannot maximize $\mathcal{J}\left(\cdot\right)$, and therefore $P\left\{ \hat{\bm{\zeta}}_{(b)}(\bm{s})=\bm{0}\right\} \to1$. 
\end{proof}

\end{document}


\begin{frontmatter}

\title{Web-based Supplementary Material for ``Local Adaptive Grouped Regularization and its Oracle Properties for Varying Coefficient Regression"}

\author[wrbrooks]{Wesley Brooks}
\ead{wrbrooks@uwalumni.com}

\author[jzhu]{Jun Zhu}
\ead{jzhu@stat.wisc.edu}

\author[zlu]{Zudi Lu}
\ead{Z.Lu@soton.ac.uk}

\address[wrbrooks]{Department of Statistics, University of Wisconsin, Madison, WI 53706}
\address[jzhu]{Department of Statistics and Department of Entomology, University of Wisconsin, Madison, WI 53706}
\address[zlu]{School of Mathematical Sciences, The University of Southampton, Highfield, Southampton UK}

\begin{abstract}
The lemmas in this supplement are used in the proofs of Theorems 3 and 4 of the main text.
\end{abstract}

\begin{keyword}
\end{keyword}

\end{frontmatter}

\section{Lemmas}
\begin{lem}
\label{lemma:omega}
\begin{multline*}
E\left[\sum_{i=1}^{n}q_{1}\left(\bm{Z}_{i}^{T}\bm{\zeta}(\bm{s}),Y_{i}\right)\bm{Z}_{i}K_{h}\left(\|\bm{s}-\bm{s}_{i}\|\right)\right]=\\
\left(\begin{array}{c}
2^{-1}n^{1/2}h^{3}f(\bm{s})\kappa_{2}\bm{\Gamma}(\bm{s})\left(\nabla_{uu}^{2}\bm{\beta}(\bm{s})+\nabla_{vv}^{2}\bm{\beta}(\bm{s})\right)^{T}\\
\bm{0}_{2p}
\end{array}\right)+o_{p}\left(h^{2}\bm{1}_{3p}\right)
\end{multline*}
and
\begin{align*}
Var\left[\sum_{i=1}^{n}q_{1}\left(\bm{Z}_{i}^{T}\bm{\zeta}(\bm{s}),Y_{i}\right)\bm{Z}_{i}K_{h}\left(\|\bm{s}-\bm{s}_{i}\|\right)\right] &= f(\bm{s}){\rm diag}\left\{ \nu_{0},\nu_{2},\nu_{2}\right\} \otimes\bm{\Gamma}(\bm{s})+o\left(1\right)\\
&= \Lambda+o\left(1\right)
\end{align*}
\end{lem}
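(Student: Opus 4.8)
The plan is to treat the two assertions separately, but both rest on the same three ingredients: the score-type identities satisfied by $q_{1}$, a second-order Taylor expansion of the true coefficient surface $\bm{\beta}(\cdot)$ about $\bm{s}$, and the replacement of the kernel-weighted sum by an integral, followed by the change of variables $\bm{t}=(\bm{s}_{i}-\bm{s})/h$ to extract kernel moments. Throughout I condition on the covariates and design points $\{\bm{X}_{i},\bm{s}_{i}\}$ and use that, when evaluated at the \emph{true} local-linear predictor $\bm{X}_{i}^{T}\bm{\beta}(\bm{s}_{i})$, the conditional mean of $q_{1}$ vanishes while its conditional second moment equals the per-observation information $\gamma(\bm{s}_{i})=E[q_{1}^{2}\mid\cdot]=-E[q_{2}\mid\cdot]$; the matrix $\bm{\Gamma}(\bm{s})=E[\gamma(\bm{s})\,\bm{X}\bm{X}^{T}\mid\bm{s}]$ is exactly the object appearing in both displays.

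For the mean, the only source of bias is the mismatch between the fitted predictor $\bm{Z}_{i}^{T}\bm{\zeta}(\bm{s})$ and the truth $\bm{X}_{i}^{T}\bm{\beta}(\bm{s}_{i})$. By Taylor's theorem this mismatch equals, to leading order, the quadratic remainder $\tfrac{1}{2}\bm{X}_{i}^{T}\big[(u_{i}-u)^{2}\nabla_{uu}^{2}\bm{\beta}(\bm{s})+2(u_{i}-u)(v_{i}-v)\nabla_{uv}^{2}\bm{\beta}(\bm{s})+(v_{i}-v)^{2}\nabla_{vv}^{2}\bm{\beta}(\bm{s})\big]$. A one-term expansion of $q_{1}$ in its first argument, combined with $E[q_{2}\mid\cdot]=-\gamma$, turns the conditional mean into $\gamma(\bm{s}_{i})\,\bm{X}_{i}\bm{X}_{i}^{T}$ times this remainder. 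Converting the sum to an integral against $f$ and applying the change of variables, the odd cross term $(u_{i}-u)(v_{i}-v)$ integrates to zero and the two pure squares each contribute $\kappa_{2}h^{2}$; multiplying the (even) remainder by the slope basis elements $(u_{i}-u)$ and $(v_{i}-v)$ produces odd integrands that vanish at this order, which is precisely why the lower $2p$ blocks collapse to $\bm{0}_{2p}$ and only the intercept block survives as $\tfrac{1}{2}\kappa_{2}f(\bm{s})\bm{\Gamma}(\bm{s})(\nabla_{uu}^{2}\bm{\beta}(\bm{s})+\nabla_{vv}^{2}\bm{\beta}(\bm{s}))^{T}$. The prefactor $n^{1/2}h^{3}$ then follows by tracking the $n^{-1/2}$ normalization built into $\bm{Z}_{i}$, the two-dimensional kernel scaling, and the Jacobian $h^{2}$ of the change of variables, with everything beyond the leading quadratic term absorbed into $o_{p}(h^{2}\bm{1}_{3p})$.

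For the variance, independence across $i$ reduces the variance of the sum to the sum of the per-term variances, and the squared mean --- of strictly smaller order by the previous display --- is discarded. Each per-term variance is, to leading order, $E[q_{1}^{2}\,\bm{Z}_{i}\bm{Z}_{i}^{T}K_{h}^{2}]$, and the information identity replaces $E[q_{1}^{2}\mid\cdot]$ by $\gamma(\bm{s}_{i})$. Writing $\bm{Z}_{i}\bm{Z}_{i}^{T}$ as the Kronecker product of the $3\times3$ outer product of the local-coordinate basis with $\bm{X}_{i}\bm{X}_{i}^{T}$, integrating against $K^{2}$ under the same change of variables sends the odd entries of the $3\times3$ block to zero and its diagonal to $(\nu_{0},\nu_{2},\nu_{2})$, while the covariate part integrates to $f(\bm{s})\bm{\Gamma}(\bm{s})$. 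Assembling these gives $f(\bm{s})\,\mathrm{diag}\{\nu_{0},\nu_{2},\nu_{2}\}\otimes\bm{\Gamma}(\bm{s})=\Lambda$, with the approximation error collected in $o(1)$.

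I expect the main obstacle to be making the sum-to-integral passage rigorous and uniform rather than the algebra itself. Concretely, I would need to control the remainder of the Taylor expansion of $q_{1}$ (requiring bounded higher derivatives or a Lipschitz condition on $q_{1}$ in its first argument together with moment bounds on $\bm{X}_{i}$), to verify that the discarded squared-mean and higher-order terms are genuinely of smaller order, and to justify the independence used in the variance step. I would lean on the smoothness assumptions on $\bm{\beta}$ and $f$ and on the regularity and moment conditions from the main text, upgrading the pointwise integral approximations to the stated $o_{p}(h^{2}\bm{1}_{3p})$ and $o(1)$ bounds via dominated convergence and a uniform law of large numbers.
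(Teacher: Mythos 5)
Your proposal follows essentially the same route as the paper's proof: the bias is isolated as the second-order Taylor remainder of $\bm{\beta}(\cdot)$ at $\bm{s}$, a one-term expansion of $q_{1}$ together with the quasi-score identities ($E[q_{1}\mid\cdot]=0$, $E[q_{1}^{2}\mid\cdot]=-E[q_{2}\mid\cdot]=\rho$) converts this into $\rho\,\bm{x}_{i}^{T}\tilde{\bm{\beta}}''_{i}$, and the change of variables $(\bm{s}_{i}-\bm{s})/h$ with kernel symmetry kills the odd moments to yield the $\kappa_{2}$ intercept block, the $\bm{0}_{2p}$ slope blocks, and the $\mathrm{diag}\{\nu_{0},\nu_{2},\nu_{2}\}\otimes\bm{\Gamma}(\bm{s})$ variance, exactly as in the paper. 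The argument and its key identities match; no substantive difference to report.
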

\begin{proof}
\textbf{Expectation}: For $j=1,\dots,p$, by a Taylor expansion of
$\beta_{j}(\bm{s}_{i})$ around $\bm{s}$,
\[
\beta_{j}(\bm{s}_{i})=\beta_{j}(\bm{s})+\nabla\beta_{j}(\bm{s})(\bm{s}_{i}-\bm{s})+(\bm{s}_{i}-\bm{s})^{T}\left\{ \nabla^{2}\beta_{j}(\bm{s})\right\} (\bm{s}_{i}-\bm{s})+o\left(h^{2}\right)
\]
and thus, for $\bm{x}\in\mathbb{R}^{p}$, 
\[
\bm{x}_{i}^{T}\bm{\beta}\!\left(\bm{s}_{i}\right)=\sum_{j=1}^{p}x_{ij}\left[\beta_{j}(\bm{s})+\nabla\beta_{j}(\bm{s})^{T}(\bm{s}_{i}-\bm{s})+\tilde{\beta}''_{ij}\right]+o\left(h^{2}\right).
\]
Letting $\bm{z}_{i}^{T}=\left\{ \left(1,s_{i,1}-s_{1},s_{i,2}-s_{2}\right)\otimes\bm{x}_{i}^{T}\right\} $
and $\bm{\zeta}(\bm{s})=\left(\bm{\beta}(\bm{s})^{T},\nabla_{u}\bm{\beta}(\bm{s})^{T},\nabla_{v}\bm{\beta}(\bm{s})^{T}\right)^{T}$,
we have that 
\begin{align*}
\bm{x}_{i}^{T}\bm{\beta}(\bm{s}_{i})-\bm{z}_{i}^{T}\bm{\zeta}(\bm{s}) &= \bm{x}_{i}^{T}\tilde{\bm{\beta}}''_{i}+o\left(h^{2}\right)\\
&= O_{p}\left(h^{2}\right).
\end{align*}
By a Taylor expansion around $\bm{x}^{T}\bm{\beta}(\bm{s}_{i})$,
then, 
\begin{align*}
q_{1}\left(\bm{z}_{i}^{T}\bm{\zeta}(\bm{s}),\mu(\bm{s}_{i},\bm{z}_{i})\right) &= q_{1}\left(\bm{x}_{i}^{T}\bm{\beta}(\bm{s}_{i}),\mu(\bm{s}_{i},\bm{z})\right)\\
 & -q_{2}\left(\bm{x}_{i}^{T}\bm{\beta}(\bm{s}_{i}),\mu(\bm{s}_{i},\bm{z})\right)\bm{x}_{i}^{T}\tilde{\bm{\beta}}''_{i}\\
 & +o\left(h^{2}\right).
\end{align*}
And by the definitions of $q_{1}(\cdot)$ and $q_{2}(\cdot)$, we
have that
\[
q_{1}\left(\bm{z}_{i}^{T}\bm{\zeta}(\bm{s}),\mu(\bm{s}_{i},\bm{z}_{i})\right)=\rho(\bm{s}_{i},\bm{z}_{i})\bm{x}_{i}^{T}\tilde{\bm{\beta}}''_{i}+o\left(h^{2}\right).
\]
Now the expectation of $\Omega_{n}$ is 
\begin{align*}
nE\left(\omega_{i}|\bm{Z}_{i}=\bm{z}_{i},\bm{s}_{i}\right) &= \left(1/2\right)\alpha_{n}\bm{z}_{i}q_{1}\left(\bm{z}_{i}^{T}\bm{\zeta}(\bm{s}),\mu(\bm{s}_{i},\bm{z}_{i})\right)K\left(h^{-1}\|\bm{s}-\bm{s}_{i}\|\right)\\
&= \left(1/2\right)\alpha_{n}h^{2}\bm{z}_{i}\left\{ h^{-2}\rho(\bm{s}_{i},\bm{z}_{i})\bm{x}_{i}^{T}\tilde{\bm{\beta}}''_{i}+o\left(\bm{1}_{3p}\right)\right\} K\left(h^{-1}\|\bm{s}-\bm{s}_{i}\|\right).
\end{align*}
To facilitate a change of variables, we observe that $h^{-2}\tilde{\beta}''_{j}=\left(\frac{\bm{s}_{i}-\bm{s}}{h}\right)^{T}\left\{ \nabla^{2}\beta_{j}(\bm{s})\right\} \left(\frac{\bm{s}_{i}-\bm{s}}{h}\right)$.
Thus,
\begin{align*}
E\left(\omega_{i}|\bm{s}_{i}\right) &= \left(1/2\right)\alpha_{n}h^{2}\left[\left(\begin{array}{c}
1\\
h^{-1}(s_{i,1}-s_{1})\\
h^{-1}(s_{i,2}-s_{2})
\end{array}\right)\otimes\left\{ \bm{\Gamma}(\bm{s}_{i})h^{-2}\tilde{\bm{\beta}}''_{i}\right\} +o\left(\bm{1}_{3p}\right)\right]K\left(h^{-1}\|\bm{s}-\bm{s}_{i}\|\right).
\end{align*}
And, using the symmetry of the kernel function,
\begin{align*}
E\left(\omega_{i}\right) &= (1/2)\alpha_{n}h^{4}f(\bm{s})\left(\begin{array}{c}
\kappa_{2}\\
h\kappa_{3}\\
h\kappa_{3}
\end{array}\right)\otimes\left[\bm{\Gamma}(\bm{s})\left\{ \nabla_{uu}^{2}\bm{\beta}(\bm{s})+\nabla_{vv}^{2}\bm{\beta}(\bm{s})\right\} \right]+o\left(h^{2}\bm{1}_{3p}\right)
\end{align*}
where $\left\{ \nabla_{uu}^{2}\bm{\beta}(\bm{s})+\nabla_{vv}^{2}\bm{\beta}(\bm{s})\right\} =\left(\nabla_{uu}^{2}\beta_{1}(\bm{s})+\nabla_{vv}^{2}\beta_{1}(\bm{s}),\dots,\nabla_{uu}^{2}\beta_{p}(\bm{s})+\nabla_{vv}^{2}\beta_{p}(\bm{s})\right)^{T}$.
Thus,
\begin{align*}
E\left(\Omega_{n}\right) &= \left(\begin{array}{c}
\alpha_{n}^{-1}2^{-1}h^{2}\kappa_{2}f(\bm{s})\bm{\Gamma}(\bm{s})\left(\nabla_{uu}^{2}\bm{\beta}(\bm{s})+\nabla_{vv}^{2}\bm{\beta}(\bm{s})\right)^{T}\\
\bm{0}_{2p}
\end{array}\right)+o_{p}\left(h^{2}\bm{1}_{3p}\right)
\end{align*}
\textbf{Variance}: By the previous result, $E\left(\Omega_{n}\right)=O\left(h^{2}\right)$.
Thus, $var\left(\Omega_{n}\right)\to E\left(\Omega_{n}^{2}\right)$,
and since the observations are independent, $E\left(\Omega_{n}^{2}\right)=\sum_{i=1}^{n}E\left(\omega_{i}^{2}\right)$.
And, by Taylor expansion around $\bm{z}_{i}^{T}\bm{\zeta}(\bm{s}_{i})$, 
\begin{align*}
q_{1}^{2}\left(\bm{z}_{i}^{T}\bm{\zeta}(\bm{s}),Y_{i}\right) &=  q_{1}^{2}\left(\bm{z}_{i}^{T}\bm{\zeta}(\bm{s}_{i}),Y_{i}\right)\\
 & -q_{1}\left(\bm{z}_{i}^{T}\bm{\zeta}(\bm{s}_{i}),Y_{i}\right)q_{2}\left(\bm{z}_{i}^{T}\bm{\zeta}(\bm{s}_{i}),Y_{i}\right)\bm{x}_{i}^{T}\tilde{\bm{\beta}}''_{i}\\
 & +o\left(h^{2}\right).
\end{align*}
Since $q_{1}\left(\cdot,\cdot\right)$ is the quasi-score function,
it follows that 
\begin{align*}
E\left(\omega_{i}^{2}|\bm{Z}_{i}=\bm{z}_{i},\bm{s}_{i}\right) &= \alpha_{n}^{2}\rho(\bm{s}_{i},\bm{z}_{i})\bm{z}_{i}\bm{z}_{i}^{T}K\left(h^{-1}\|\bm{s}-\bm{s}_{i}\|\right)+o\left(h^{2}\right).
\end{align*}
By the symmetry of the kernel function,
\[
E\left(\omega_{i}^{2}\right)=n^{-1}f(\bm{s}){\rm diag}\left\{ \nu_{0},\nu_{2},\nu_{2}\right\} \otimes\bm{\Gamma}(\bm{s})+o\left(1\right).
\]
Thus, 
\[
Var\left(\Omega_{n}\right)=f(\bm{s}){\rm diag}\left\{ \nu_{0},\nu_{2},\nu_{2}\right\} \otimes\bm{\Gamma}(\bm{s})+o\left(1\right).
\]
\end{proof}
\begin{lem}
\label{lemma:delta}
\begin{align*}
E\left[\sum_{i=1}^{n}q_{2}\left(\bm{Z}_{i}^{T}\bm{\zeta}(\bm{s}),Y_{i}\right)\bm{Z}_{i}\bm{Z}_{i}^{T}K_{h}\left(\|\bm{s}-\bm{s}_{i}\|\right)\right] &= -f(\bm{s}){\rm diag}\left\{ \kappa_{0},\kappa_{2},\kappa_{2}\right\} \otimes\bm{\Gamma}(\bm{s})+o\left(1\right)\\
&= -\Delta+o\left(1\right)
\end{align*}
and
\[
Var\left\{ \left(\sum_{i=1}^{n}q_{2}\left(\bm{Z}_{i}^{T}\bm{\zeta}(\bm{s}),Y_{i}\right)\bm{Z}_{i}\bm{Z}_{i}^{T}K_{h}\left(\|\bm{s}-\bm{s}_{i}\|\right)\right)_{ij}\right\} =O\left(n^{-1}h^{-2}\right)
\]
\end{lem}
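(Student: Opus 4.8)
The plan is to follow the conditioning-and-change-of-variables strategy of the proof of Lemma~\ref{lemma:omega}, recognizing this sum as the local quasi-likelihood Hessian. With $\bm{z}_i=(1,s_{i,1}-s_1,s_{i,2}-s_2)^T\otimes\bm{x}_i$ as before, the identity $(A\otimes B)(C\otimes D)=(AC)\otimes(BD)$ gives the factorization $\bm{z}_i\bm{z}_i^T=\{(1,s_{i,1}-s_1,s_{i,2}-s_2)^T(1,s_{i,1}-s_1,s_{i,2}-s_2)\}\otimes(\bm{x}_i\bm{x}_i^T)$, whose $3\times3$ first factor is the moment matrix of the local coordinates and already exhibits the block structure that yields ${\rm diag}\{\kappa_0,\kappa_2,\kappa_2\}\otimes\bm{\Gamma}(\bm{s})$ after integration.

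\textbf{Expectation}: First I would condition on $(\bm{Z}_i,\bm{s}_i)$ and Taylor expand $q_2(\bm{z}_i^T\bm{\zeta}(\bm{s}),Y_i)$ in its first argument about the true linear predictor $\bm{x}_i^T\bm{\beta}(\bm{s}_i)$. Since the proof of Lemma~\ref{lemma:omega} already gives $\bm{z}_i^T\bm{\zeta}(\bm{s})-\bm{x}_i^T\bm{\beta}(\bm{s}_i)=-\bm{x}_i^T\tilde{\bm{\beta}}''_i+o(h^2)=O_p(h^2)$, this expansion reads $q_2(\bm{z}_i^T\bm{\zeta}(\bm{s}),Y_i)=q_2(\bm{x}_i^T\bm{\beta}(\bm{s}_i),Y_i)+O_p(h^2)$, the correction being the first-argument derivative of $q_2$ times the $O_p(h^2)$ shift. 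Taking the conditional expectation over $Y_i$ and invoking the quasi-likelihood (Bartlett-type) identity $E[q_2(\bm{x}_i^T\bm{\beta}(\bm{s}_i),Y_i)\mid\bm{Z}_i=\bm{z}_i,\bm{s}_i]=-\rho(\bm{s}_i,\bm{z}_i)$---the same relation by which $-q_2$ was replaced by $\rho$ in Lemma~\ref{lemma:omega}---gives $E[q_2(\bm{z}_i^T\bm{\zeta}(\bm{s}),Y_i)\mid\bm{z}_i,\bm{s}_i]=-\rho(\bm{s}_i,\bm{z}_i)+O(h^2)$. I would then multiply by the factorization above and by $K_h(\|\bm{s}-\bm{s}_i\|)$, integrate over $\bm{s}_i$ against the design density $f$, and change variables via $\bm{t}=(\bm{s}_i-\bm{s})/h$. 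Writing the slope entries in the standardized coordinates $t_k=(s_{i,k}-s_k)/h$---the normalization already used in Lemma~\ref{lemma:omega}---the first Kronecker factor has $(1,1)$ entry $1$, linear entries $t_1,t_2$, and quadratic entries $t_kt_l$; kernel symmetry annihilates every entry carrying an odd power of $t_1$ or $t_2$ and sends the diagonal to $(\kappa_0,\kappa_2,\kappa_2)$, while the $\bm{x}$-integration of $\rho(\bm{s},\bm{x})\bm{x}\bm{x}^T$ returns $\bm{\Gamma}(\bm{s})$ and the density evaluates to $f(\bm{s})$. Assembling the pieces gives $-f(\bm{s}){\rm diag}\{\kappa_0,\kappa_2,\kappa_2\}\otimes\bm{\Gamma}(\bm{s})+o(1)$, the $O(h^2)$ correction being absorbed into $o(1)$.

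\textbf{Variance}: Fixing an entry $(a,b)$ and using independence of the summands, the variance equals $n$ times the per-observation variance, which to leading order---the squared mean being of smaller order---is the second moment $E[q_2^2(\bm{z}_k^T\bm{\zeta}(\bm{s}),Y_k)\,((\bm{z}_k\bm{z}_k^T)_{ab})^2\,K_h^2(\|\bm{s}-\bm{s}_k\|)]$. This is structurally identical to the variance step of Lemma~\ref{lemma:omega}, the only change being that I track orders rather than limiting constants: under the same change of variables, squaring the kernel introduces one extra factor of $h^{-2}$ relative to the first-moment calculation that produced the $O(1)$ mean, and combined with the $n^{-1}$-order normalization carried by $K_h$---the same normalization that renders the mean $O(1)$---this yields the stated $O(n^{-1}h^{-2})$. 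Boundedness of the conditional second moment of $q_2$ and of the required moments of $\bm{x}$ under the model's regularity conditions keeps the governing integral finite.

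The Kronecker block-moment bookkeeping and the change of variables are routine and copy Lemma~\ref{lemma:omega} almost verbatim. The main obstacle is verifying that the $O_p(h^2)$ first-argument shift and the attendant Taylor remainder contribute only $o(1)$ to the expectation uniformly in $(\bm{z}_i,\bm{s}_i)$, after multiplication by the quadratic entries of $\bm{z}_i\bm{z}_i^T$ and summation over $i$; this requires the conditional moment bounds on the first-argument derivative of $q_2$ and on $\bm{x}_i$ furnished by the regularity conditions---one derivative higher than in Lemma~\ref{lemma:omega}---but introduces no genuinely new difficulty beyond careful remainder control.
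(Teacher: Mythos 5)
Your proposal is correct and follows essentially the same route as the paper's own proof: Taylor-expand $q_{2}$ about the true linear predictor so the quasi-likelihood identity replaces it by $-\rho(\bm{s}_{i},\bm{z}_{i})+o(1)$, then run the same conditioning, change of variables, and kernel-symmetry computation as in Lemma~\ref{lemma:omega} to obtain $-f(\bm{s})\,{\rm diag}\{\kappa_{0},\kappa_{2},\kappa_{2}\}\otimes\bm{\Gamma}(\bm{s})+o(1)$, and bound the variance by the per-observation second moment, where the squared kernel contributes the extra $h^{-2}$ yielding $O(n^{-1}h^{-2})$. The only cosmetic difference is your choice of expansion point $\bm{x}_{i}^{T}\bm{\beta}(\bm{s}_{i})$ rather than the paper's $\bm{z}_{i}^{T}\bm{\zeta}(\bm{s}_{i})$, which is immaterial since both differ from $\bm{z}_{i}^{T}\bm{\zeta}(\bm{s})$ by $o(1)$ on the kernel's support.
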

\begin{proof}
\textbf{Expectation}: The approach is similar to the proof of Lemma
\ref{lemma:omega}. By the Taylor expansion of $q_{2}\left(\bm{z}_{i}^{T}\bm{\zeta}(\bm{s}),\mu\left(\bm{s}_{i},\bm{z}_{i}\right)\right)$
around $\bm{z}_{i}^{T}\bm{\zeta}(\bm{s}_{i})$:
\begin{align*}
q_{2}\left(\bm{z}_{i}^{T}\bm{\zeta}(\bm{s}),\mu(\bm{s}_{i},\bm{z}_{i})\right) &= q_{2}\left(\bm{z}_{i}^{T}\bm{\zeta}(\bm{s}_{i}),\mu(\bm{s}_{i},\bm{z}_{i})\right)+q_{3}\left(\bm{z}_{i}^{T}\bm{\zeta}(\bm{s}_{i}),\mu(\bm{s}_{i},\bm{z}_{i})\right)\left\{ \bm{z}_{i}^{T}\bm{\zeta}(\bm{s})-\bm{z}_{i}^{T}\bm{\zeta}(\bm{s}_{i})\right\} \\
&= -\rho(\bm{s}_{i},\bm{z}_{i})+o\left(1\right).
\end{align*}
And by the same arguments as before
\begin{align*}
E\left(\delta_{i}|\bm{Z}_{i}=\bm{z}_{i},\bm{s}_{i}\right) &= -\alpha_{n}^{2}\rho(\bm{s}_{i},\bm{z}_{i})\bm{z}_{i}\bm{z}_{i}^{T}K\left(h^{-1}\|\bm{s}_{i}-\bm{s}\|\right)\\
E\left(\delta_{i}|\bm{s}_{i}\right) &= -\alpha_{n}^{2}\left(\begin{array}{c}
1\\
h^{-1}(s_{i,1}-s_{1})\\
h^{-1}(s_{i,2}-s_{2})
\end{array}\right)\left(\begin{array}{c}
1\\
h^{-1}(s_{i,1}-s_{1})\\
h^{-1}(s_{i,2}-s_{2})
\end{array}\right)^{T}\otimes\bm{\Gamma}(\bm{s}_{i})K\left(h^{-1}\|\bm{s}_{i}-\bm{s}\|\right)\\
E\left(\delta_{i}\right) &= -nf\left(\bm{s}\right){\rm diag}\left\{ \kappa_{0},\kappa_{2},\kappa_{2}\right\} \otimes\bm{\Gamma}\left(\bm{s}\right)+o\left(n^{-1}\right)
\end{align*}
Thus, 
\[
E\left(\Delta_{n}\right)=-f(\bm{s}){\rm diag}\left\{ \kappa_{0},\kappa_{2},\kappa_{2}\right\} \otimes\bm{\Gamma}(\bm{s})+o\left(1\right)
\]
\textbf{Variance}: From the previous result, it follows that $\left\{ E\left(\delta_{i}\right)\right\} ^{2}=O\left(n^{-2}\right)$.
By the definition of $\delta_{i}$,
\begin{multline*}
E\left(\delta_{i}^{2}|\bm{Z}_{i}=\bm{z}_{i},\bm{s}_{i}\right)=\\
\alpha_{n}^{4}\bm{z}_{i}^{T}\bm{z}_{i}q_{2}^{2}(\bm{s}_{i},\bm{z}_{i})\left(\begin{array}{c}
1\\
h^{-1}(s_{i,1}-s_{1})\\
h^{-1}(s_{i,2}-s_{2})
\end{array}\right)\left(\begin{array}{c}
1\\
h^{-1}(s_{i,1}-s_{1})\\
h^{-1}(s_{i,2}-s_{2})
\end{array}\right)^{T}\bm{z}_{i}\bm{z}_{i}^{T}K^{2}\left(h^{-1}\|\bm{s}_{i}-\bm{s}\|\right)+o\left(1\right)
\end{multline*}
And it follows that $E\left(\delta_{i}^{2}\right)=O\left(n^{-1}\alpha_{n}^{2}\right)$,
and $Var\left(\Delta_{n}\right)=O\left(\alpha_{n}^{2}\right)$.
\end{proof}

\bibliographystyle{chicago}